\DeclareMathOperator{\NP}{\mathrm{NP}}
\DeclareMathOperator{\diam}{\mathrm{diam}}
\DeclareMathOperator{\AT}{\mathrm{AT}}
\newcommand{\es}{\varnothing}
\title{\sc {Convexities in Some Special Graph Classes ---
  New Results in AT-free Graphs and Beyond}} 
\author{
Wing-Kai~Hon\inst{1} 
\and 
Ton~Kloks\inst{2} 
\and 
Hsiang-Hsuan~Liu\inst{1} 
\and
Hung-Lung~Wang\inst{2}
\and 
Yue-Li~Wang\inst{3}}
\institute{
Department of Computer Science\\
National Tsing Hua University, Taiwan\\
{\tt hhliu@cs.nthu.edu.tw}
\and
Institute of Information and Decision Sciences\\
National Taipei University of Business, Taiwan\\
{\tt hlwang@ntub.edu.tw}
\and 
Department of Information Management\\
National Taiwan University of Science and Technology, Taiwan\\
{\tt ylwang@cs.ntust.edu.tw}} 
\begin{document}

\maketitle 

\begin{abstract}
We study convexity properties of graphs.  In this paper we present a 
     linear-time algorithm for the geodetic number in tree-cographs.  Settling a 
     10-year-old conjecture,  we prove that the Steiner number is at least the 
     geodetic number in AT-free graphs. Computing a maximal and proper monophonic set in 
     $\AT$-free graphs is NP-complete. We present polynomial algorithms for the 
     monophonic number in permutation graphs and the geodetic number in $P_4$-
     sparse graphs.
\end{abstract}

\section{Introduction}

The geodetic number of a graph was introduced by 
Buckley, Harary and Quintas and 
by Harary, Loukakis and Tsouros~\cite{kn:buckley2,kn:harary} 
(see also~\cite{kn:pelayo} for a recent survey and~\cite{kn:hansen} 
for pointing out some errors in~\cite{kn:harary}). 
It is defined as follows. 
A geodesic in a graph is a shortest path between two vertices, that is, 
a path that connects the two vertices with a minimal number of edges. 
Let $G=(V,E)$ be a graph. We write $n=|V|$ and $m=|E|$.   
For a set $S \subseteq V$ let 
\begin{equation}
\label{eqn1}
I(S)=\{\; z\;|\; \exists_{x,y \in S} \; 
\text{$z$ lies on a $x,y$-geodesic}\;\}.
\end{equation} 
The set $I(S)$ is called the geodetic closure, or also, the interval of $S$. 
A set $S$ is called convex if $I(S)=S$. 
A set $S$ is \underline{geodetic} if 
$I(S)=V$. 
The geodetic number $g(G)$ of $G$ is defined as the minimal cardinality 
of a geodetic set. 

\bigskip 

For a set $S$, the interval $I(S)$ can be computed in 
$O(|S|\cdot m)$ time~\cite{kn:dourado}. 
The computation of the geodetic number is $\NP$-complete, even when 
restricted to chordal graphs, chordal bipartite graphs, or cobipartite 
graph~%
~\cite{kn:atici,kn:dourado3,kn:ekim}. 
It is polynomial for cographs and splitgraphs~\cite{kn:dourado3},  
for unit interval graphs, bipartite permutation graphs and 
block-cactus graphs~\cite{kn:ekim} and for 
Ptolemaic graphs~\cite{kn:farber}. 
It can be seen that the geodetic number problem  
can be formulated in monadic second-order logic 
(MSOL, see, eg,~\cite{kn:kante}). 
 
\bigskip 

Let $G=(V,E)$ be a graph and $W \subseteq V$. A \textit{Steiner $W$-tree} 
is a connected subgraph of $G$ with a minimal number 
of edges that contains all vertices of $W$. The Steiner distance of 
$W$ is the size of a Steiner $W$-tree. 
The Steiner interval $S(W)$ is the set of all vertices that are 
in some Steiner $W$-tree. If $S(W)=V$ then $W$ is a Steiner set. 
The Steiner number $s(G)$ is defined as the minimal cardinality  
of a Steiner set~\cite{kn:pelayo}. 

\bigskip 

For graphs in general there is no order relation between the 
Steiner number and the geodetic number~\cite{kn:hernando,kn:pelayo}.   
For distance-hereditary graphs it was shown that every 
Steiner set is geodetic, that is, 
$g(G) \leq s(G)$.%
The same was proved for interval graphs~\cite{kn:hernando,kn:pelayo}. 
In their paper the authors posed the question whether the same holds 
true for $\AT$-free graphs.  We answer the question in Section~\ref{section:at_free}. 

\bigskip

The main result of this paper is on the geodetic number and the Steiner number. We start discussion on the geodetic number with a simple graph structure, the \textit{tree-cograph}, defined in Section~\ref{section:tree_cograph}, and show that $g(G)$ can be computed in linear time when $G$ is a tree-cograph. Next, in Section~\ref{section:at_free}, we investigate the relation between the geodetic number and the Steiner number of an $\AT$-free graph. We show that in an $\AT$-free graph $G$,  every Steiner set is geodesic, i.e., $g(G)\leq s(G)$. This answers a question posed by Hernando, Jiang, Mora, Pelayo, and Seara in 2005~\cite{kn:hernando}. A closely related concept, the \textit{monophonic set} (defined in Section~\ref{section:monophonic}), is also investigated, and we show that computing a maximal and proper monophonic set is NP-complete, even for $\AT$-free graphs.  

\bigskip

Because of space limitations we relocate some results to appendices, including  the 2-geodetic number for a tree-cograph (Appendix~\ref{section:2_geodetic_number}), the geodetic number of a $P_4$-sparse graph (Appendix~\ref{section P4sparse}), and the \textit{monophonic number}, which is the minimal cardinality of a monophonic set, of a permutation graph (Appendix~\ref{section:monophonic_permutation}).

\section{The geodetic number for tree-cographs}
\label{section:tree_cograph}

Tree-cographs were introduced by Tinhofer in~\cite{kn:tinhofer}. 
They are defined as follows. 

\begin{definition}
A graph $G$ is a tree-cograph if one of the following holds. 
\begin{enumerate}[\rm 1.]
\item $G$ is disconnected.
\item $\Bar{G}$ is disconnected. 
\item $G$ or $\Bar{G}$ is a tree. 
\end{enumerate}
\end{definition}

\bigskip 

To compute the geodetic number for tree-cographs, we need an algorithm to compute the number for trees. In the following, we show how the computation can be done in linear time.

\begin{lemma}
\label{lm simplicial}
Let $G$ be a graph and let $x$ be a simplicial vertex. 
Then $x$ is an element of every geodetic set in $G$. 
\end{lemma}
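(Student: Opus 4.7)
The plan is to argue by contradiction. Suppose $S$ is a geodetic set with $x\notin S$. Since $I(S)=V$ (by definition~\eqref{eqn1}), there exist vertices $y,z\in S$ such that $x$ lies on some $y,z$-geodesic $P$. Because $x\notin S$, we have $x\neq y$ and $x\neq z$, so $x$ is an interior vertex of $P$; in particular, $x$ has both a predecessor $u$ and a successor $v$ on $P$.

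Next I would exploit simpliciality. Both $u$ and $v$ lie in the neighborhood $N(x)$, and since $x$ is simplicial, $N(x)$ induces a clique, so $uv\in E$. Replacing the subpath $u,x,v$ by the single edge $uv$ in $P$ yields a $y,z$-walk that is shorter than $P$ by one edge. This contradicts the assumption that $P$ is a geodesic.

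Therefore no such $S$ exists, and every geodetic set must contain $x$. The argument is essentially a one-line shortcut contradiction; the only thing to be careful about is checking that $x$ is genuinely in the interior of the chosen geodesic so that both $u$ and $v$ exist, which follows immediately from $x\notin S$. I do not anticipate a real obstacle here.
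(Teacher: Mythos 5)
Your argument is correct and is exactly the paper's (one-line) proof spelled out: the paper simply asserts that a simplicial vertex occurs in a geodesic only as an endvertex, and your shortcut-through-the-clique contradiction is the standard justification of that assertion. No gap here.
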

\begin{proof}
A simplicial occurs in a geodesic only as an endvertex. 
\qed\end{proof}

\medskip 

\begin{lemma}
\label{lm tree}
Let $T$ be a tree. The set of leaves of $T$ forms a minimum 
geodetic set. 
\end{lemma}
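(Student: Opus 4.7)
The plan is to prove the two directions separately: first that every leaf must appear in any geodetic set (giving a lower bound), and then that the set of leaves is itself geodetic (giving the matching upper bound).

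For the lower bound, I would invoke Lemma~\ref{lm simplicial}. In a tree, every leaf has exactly one neighbor, so its open neighborhood is a single vertex and hence trivially a clique, making the leaf simplicial. Therefore every leaf lies in every geodetic set of $T$. This shows that no geodetic set can be smaller than the set of leaves, and moreover, if the set of leaves itself is geodetic then it is automatically minimum.

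For the upper bound, let $L$ denote the set of leaves of $T$; I would show $I(L)=V(T)$. Every leaf is itself contained in $L$, so it remains to cover each internal vertex $v$. Since $v$ is internal, its degree is at least $2$, and removing $v$ splits $T$ into at least two connected components, each containing at least one leaf (because every finite tree on more than one vertex has leaves in every subtree rooted away from a chosen vertex). Pick leaves $\ell_1$ and $\ell_2$ lying in two different components of $T-v$. The unique $\ell_1,\ell_2$-path in $T$ must then pass through $v$, and since trees have unique paths between any two vertices, this path is the unique $\ell_1,\ell_2$-geodesic. Hence $v\in I(\{\ell_1,\ell_2\})\subseteq I(L)$.

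Combining both bounds yields that $L$ is a geodetic set of minimum size. I do not expect any step to be a real obstacle here; the only mild subtlety is justifying that each component of $T-v$ contains a leaf, which is immediate from the fact that every finite tree on at least two vertices has leaves (applied inside each component after re-attaching the edge to $v$, whose endpoint in that component is treated as a non-leaf candidate).
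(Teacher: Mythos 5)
Your proposal is correct and follows essentially the same route as the paper: leaves are simplicial so Lemma~\ref{lm simplicial} forces them into every geodetic set, and every internal vertex lies on the (unique, hence shortest) path between two leaves in different components of $T-v$. You simply spell out the details that the paper leaves implicit.
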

\begin{proof}
By Lemma~\ref{lm simplicial}, any minimum geodetic set contains 
all the leaves. Since any other, that is, internal vertex, of 
$T$ lies on a geodesic between two leaves, the set of leaves 
forms a geodetic set. 
\qed\end{proof}

\medskip 

\begin{lemma}
\label{lm cotree}
Let $T$ be a tree with $n$ vertices. 
\begin{enumerate}[\rm (a)]
\item If $\diam(T) \leq 2$ then $g(\Bar{T})=n$. 
\item If $\diam(T) =3$ then $g(\Bar{T})=2$. 
\item If $T$ has a vertex of degree two and $\diam(T)>3$ 
then $g(\Bar{T})=3$. 
\item Otherwise, $g(\Bar{T})=4$. 
\end{enumerate}
\end{lemma}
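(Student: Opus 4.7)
The plan is to dispatch the four cases separately, in each case analyzing the structure of $\bar{T}$.

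For (a), $T$ is $K_1$, $K_2$, or a star $K_{1,n-1}$, so $\bar{T}$ is a disjoint union of cliques. Every vertex is then simplicial, and by Lemma~\ref{lm simplicial} must lie in any geodetic set, so $g(\bar{T}) = n$. For (b), $T$ is a double star with central edge $uv$ and leaf sets $A$ and $B$ attached to $u$ and $v$ respectively. I would observe that $u$ and $v$ have no common neighbor in $\bar{T}$ (because $N_T[u] \cup N_T[v] = V$), while every walk $u - b - a - v$ with $b \in B$ and $a \in A$ is a path of $\bar{T}$; this forces $d_{\bar{T}}(u,v) = 3$ and exhibits every vertex of $A \cup B$ as an internal vertex of such a geodesic. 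Hence $\{u, v\}$ is geodetic and $g(\bar{T}) = 2$.

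For (c) and (d), a common preparatory observation is key: once $\diam(T) \geq 4$, $T$ is not a double star, so $\bar{T}$ has diameter exactly $2$. Consequently, $I(S)$ in $\bar{T}$ consists of $S$ together with the common $\bar{T}$-neighbors of those pairs in $S$ that are $T$-adjacent. A pair $\{x,y\} \subseteq S$ with $xy \in E(T)$ therefore covers exactly $V \setminus (N_T[x] \cup N_T[y])$, while a pair with $xy \notin E(T)$ contributes nothing. A short argument moreover shows that no two-element set can be geodetic, since it would force both elements to have no $T$-neighbor, impossible in a connected tree with $n \geq 3$.

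For part (c), I let $v$ be a degree-$2$ vertex with neighbors $a, b$ and take $S = \{v, a, b\}$. Among the three pairs only $(v,a)$ and $(v,b)$ are $T$-edges, and a vertex $z \notin S$ can be uncovered only if it is $T$-adjacent to both $a$ and $b$. In a tree this forces $z$ to be the unique common $T$-neighbor of $a$ and $b$, which is $v \in S$. Combined with the preparatory lower bound $g(\bar{T}) \geq 3$, this yields $g(\bar{T}) = 3$.

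Part (d) is the main obstacle. For the lower bound I assume for contradiction a geodetic set $S = \{x, y, z\}$ of size $3$ and split into subcases by the number of $T$-edges among its three pairs: three edges would form a triangle in $T$; two edges force the common endpoint to have $T$-degree $2$; one edge forces both of its endpoints to be leaves, collapsing $T$ to $K_2$; and zero edges leave every vertex of $V \setminus S$ uncovered. Each subcase contradicts either the no-degree-$2$ hypothesis of (d) or the fact that $n \geq 5$. For the upper bound I fix a diametral path $p_0 p_1 \cdots p_d$ with $d \geq 4$. When $d \geq 5$ I take $S = \{p_0, p_1, p_{d-1}, p_d\}$: the $T$-edges $p_0 p_1$ and $p_{d-1} p_d$ together cover everything outside $N_T[p_1] \cap N_T[p_{d-1}]$, and this intersection is empty because $d_T(p_1, p_{d-1}) \geq 3$. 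The harder subcase is $d = 4$, where the two end edges leave $p_2$ uncovered; here I take $S = \{p_0, p_1, p_2, p_3\}$ and verify by a direct computation on the three consecutive $T$-edges that the uncovered region reduces to $\{p_1, p_2\} \subseteq S$.
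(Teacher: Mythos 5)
Your proof is correct and follows essentially the same route as the paper's, whose own argument is only a sketch that asserts the key facts (the degree-two witness for $g(\Bar{T})\leq 3$, and the ``$P_5$ minus an endpoint'' witness when $\diam(T)=4$) without proof. You supply exactly the missing details --- the reduction via $\diam(\Bar{T})=2$ to covering by $T$-adjacent pairs, the explicit geodetic witnesses, and the case analysis for the lower bound $g(\Bar{T})\geq 4$ --- and all of them check out.
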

\begin{proof}
Assume that $n > 3$ and that $\diam(T) \geq 3$. 
If $\diam(T)=3$ then $g(\Bar{T})=2$.  
If $T$ has a $P_3$ with the middle vertex of degree 2, then 
$g(\Bar{T}) \leq 3$. 
In this case, if $\diam(T)=3$ 
then $g(\Bar{T})=2$, and $g(\Bar{T})=3$ otherwise.  

\medskip 

\noindent
Henceforth assume that $T$ has no vertex of degree 2 and $\diam(T)>3$. 
It is easy to see that, if $\diam(T) \geq 5$ then $g(\Bar{T})=4$ and 
if $\diam(T)=4$ then a geodetic set needs all the vertices of a $P_5$ 
except some endpoint.  

\medskip 

\noindent
This proves the lemma.  
\qed\end{proof}

\bigskip 
Instead of reducing to MSOL, we compute the geodetic number with a relatively simple and efficient algorithm. The proposed linear-time algorithm is based on a parameter called \textit{2-geodetic number} of a graph, defined as follows.

\medskip

\begin{definition}
Let $G$ be a graph. A geodetic set $S \subseteq V$ is $2$-geodetic%
\footnote{The $2$-geodetic convexity should not be 
confused with the $P_3$-convexity,  
studied, e.g., in~\cite{kn:centeno}.} 
if 
every vertex $x$ of $V \setminus S$ has two nonadjacent neighbors 
in $S$, that is, there are vertices $a,b \in N(x) \cap S$ 
such that $[a,x,b]$ induces a $P_3$.  
The $2$-geodetic number, $g_2(G)$, of $G$ is the minimal cardinality of a 
$2$-geodetic set in $G$. 
\end{definition}

\bigskip

\begin{lemma}
\label{lm 2-geod tree}
There exists a linear-time algorithm to compute 
the $2$-geodetic number for trees. 
\end{lemma}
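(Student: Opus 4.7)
The plan is to reduce $g_2(T)$ to a dynamic program over a rooted tree, evaluated by postorder traversal with $O(\deg(v))$ work per vertex. I begin with two simplifications that are specific to trees. Since a tree has no triangles, any two distinct neighbors of a vertex are automatically nonadjacent, so ``two nonadjacent neighbors in $S$'' collapses to ``two neighbors in $S$''. Moreover, if every internal vertex outside $S$ has two neighbors in $S$, the set $S$ is automatically geodetic: each such vertex lies on the unique path between its two neighbors in $S$, which is a geodesic. By Lemma~\ref{lm simplicial} every leaf lies in $S$ anyway (a leaf has only one neighbor, so it can never meet the $2$-condition). Thus the problem becomes: choose a minimum $S$ containing all leaves such that every internal vertex outside $S$ has at least two neighbors in $S$.

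Next I root $T$ at an arbitrary vertex $r$ and, for every vertex $v$, compute three quantities giving the minimum number of chosen vertices in the subtree rooted at $v$ under three mutually exclusive conditions on $v$: $A(v)$ assumes $v\in S$; $B(v)$ assumes $v\notin S$ with at least two children of $v$ in $S$ (so $v$'s own $2$-condition is already met inside the subtree); and $C(v)$ assumes $v\notin S$ with exactly one child of $v$ in $S$ (so the parent of $v$ must lie in $S$ for $v$'s condition to be eventually satisfied). For a leaf only $A(v)=1$ is finite. For an internal $v$ with children $c_1,\ldots,c_k$, a child $c_i$ in state $C$ is compatible only with $v\in S$, so
\[
A(v)=1+\sum_{i=1}^{k}\min\bigl(A(c_i),B(c_i),C(c_i)\bigr),
\]
whereas $B(v)$ and $C(v)$ sum only over $\min(A(c_i),B(c_i))$ subject to a counting condition on how many children are taken in state $A$. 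Each of the three values at $v$ is computable in $O(k)$ time: $C(v)$ through a closed form of the shape $\sum_i B(c_i)+\min_j (A(c_j)-B(c_j))$, and $B(v)$ by a linear scan that, if fewer than two children already prefer $A$, promotes one or two children minimizing the ``penalty'' $A(c_i)-\min(A(c_i),B(c_i))$. The answer is $g_2(T)=\min(A(r),B(r))$, and summing over vertices gives total running time $O(n)$.

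The step requiring the most care is the isolation of state $C$: a child in $C$ imposes the obligation that its parent lie in $S$, so $C$-children must be forbidden in the sums defining $B(v)$ and $C(v)$. Once this promise-and-discharge bookkeeping is tracked in exactly one place, the correctness of the recurrence is a routine induction on subtree sizes, and the linear time bound is immediate.
\qed
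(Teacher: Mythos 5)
Your proposal is correct and follows essentially the same route as the paper: a three-state dynamic program on the rooted tree, where your $A(v)$, $B(v)$, $C(v)$ correspond exactly to the paper's $\alpha_v$, $\gamma_v$, $\beta_v$, and the answer is likewise $\min$ of the root's ``in $S$'' and ``two children in $S$'' states. If anything, you are more careful than the paper's printed recurrences on the one genuinely delicate point -- that a child in the ``exactly one child selected'' state is compatible only with a selected parent and so must be excluded from the sums for $B(v)$ and $C(v)$ -- whereas the paper's formulas for $\beta_v$ and $\gamma_v$ still take minima over $\beta_x$ for non-selected children.
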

\begin{proof}
The following is a dynamic programming
algorithm to compute $g_2(T)$ for tree $T$ rooted at an 
arbitrary vertex $r$.
The subtree of $T$ with root $v$ is denoted by $T_v$ and $C(v)$ 
denotes the set containing all children of $v$. 
Let $S$ be a minimal 2-geodetic set. 
Let $\alpha_v$, $\beta_v$, and $\gamma_v$ be 
the numbers of vertices in $S\cap T_v$ 
when $v\in S$, $v\notin S$ and $|S\cap C(v)|=1$, and 
$v\notin S$ and $|S\cap C(v)|\geqslant 2$, respectively.
The values of $\alpha_v$, $\beta_v$, and $\gamma_v$ can be 
computed as follows.
\begin{multline}
\alpha_v=
\begin{cases}
1 & \text{if $v$ is a leaf}\\
1+\sum_{x\in C(v)}\min\{\alpha_x,\beta_x,\gamma_x\} & \text{otherwise.}
\end{cases}\\
\shoveleft{\beta_v=
\begin{cases}
n & \text{if $v$ is a leaf}\\
\omega_{v,1}+
\sum_{x\in C(v)}\min\{\beta_x,\gamma_x\} & \text{otherwise.}
\end{cases}}\\
\shoveleft{\gamma_v=
\begin{cases}
n & \text{if $v$ is a leaf}\\
\omega_{v,2}+\omega_{v,3}+
\sum_{x\in C(v)}\min\{\alpha_x,\beta_x,\gamma_x\} & 
\text{otherwise.}
\end{cases}}\\
\shoveleft{\text{Here}\quad  
\omega_{v,1}=\min\limits_{x\in C(v)}
\{\alpha_x-\min\{\beta_x,\gamma_x\}\}}, 
\quad \omega_{v,2}  = \min\limits_{x\in C(v)}
\{\alpha_x-\min\{\alpha_x,\beta_x,\gamma_x\}\},\\
\shoveleft{ 
\omega_{v,3} = \min\limits_{x\in C(v)\setminus\{u\}}
\{\alpha_x-\min\{\alpha_x,\beta_x,\gamma_x\}\}},
\quad \text{and $u$ is a vertex with} \\
\omega_{v,2}=\alpha_u-\min\{\alpha_u,\beta_u,\gamma_u\}.
\end{multline}
Finally, we have $g_2(T)=\min\{\alpha_r,\gamma_r\}$. 
It is clear that $g_2(T)$ can be computed in linear time. 

\medskip 

\noindent
This completes the proof.
\qed\end{proof}

\bigskip 

\begin{lemma}
\label{lm 2-geod comp tree}
Assume $T$ is a tree. Then 
\[g_2(\Bar{T})= 
\begin{cases}
3 & \text{if $\diam(T)=3$ and there is a vertex of degree two}\\ 
4 & \text{if $\diam(T)=3$ and no vertex has degree two}\\
g(\Bar{T}) & \text{otherwise.}
\end{cases}\]
\end{lemma}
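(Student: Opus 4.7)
My plan is to decompose the argument into a universal lower bound
$g_2(\Bar T) \ge g(\Bar T)$ and a case analysis mirroring the cases of
Lemma~\ref{lm cotree}. The lower bound is immediate: any 2-geodetic set $S$
is already geodetic, because the induced $P_3$ $[a,w,b]$ required for each
$w \notin S$ is itself an $a,b$-geodesic in $\Bar T$. Combined with
Lemma~\ref{lm cotree} this already yields the ``otherwise'' equalities once
I exhibit upper-bound witnesses of the appropriate sizes; the extra work is
to strengthen the lower bound from the value $2$ of $g(\Bar T)$ to $3$ or
$4$ in the two $\diam(T)=3$ sub-cases.

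For the upper bounds I will rely on the following witness principle: given
$S$ and $w \notin S$, a pair $\{s_1,s_2\} \subseteq S$ places $w$ at the
middle of an induced $P_3$ in $\Bar T$ with endpoints in $S$ precisely when
$s_1 s_2$ is an edge of $T$ and neither $s_1$ nor $s_2$ is a $T$-neighbour
of $w$. Because $T$ is acyclic, no $w$ can be $T$-adjacent to two vertices
of a short path in $T$, so such witness pairs are abundant. If
$\diam(T) \ge 4$ and $T$ has a degree-$2$ vertex $y$ with neighbours $x,z$,
I will use $S=\{x,y,z\}$: the only $T$-edges in $S$ are $xy$ and $yz$, and
no $w \notin S$ is $T$-adjacent to both $x$ and $z$ (that would close a
$4$-cycle with the path $x\,y\,z$), so one of these pairs witnesses $w$.
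For $\diam(T) \ge 4$ with no degree-$2$ vertex, I take the first four
vertices $p_0,p_1,p_2,p_3$ of a diametral path: the tree property forbids
any $w$ from being $T$-adjacent to two of these vertices, so $w$ has at
least three $\Bar T$-neighbours inside $S$ and contains at least one of the
three $T$-edges $p_i p_{i+1}$ as a witness pair. The case
$\diam(T) \le 2$ is trivial since $\Bar T$ is a disjoint union of a clique
and an isolated vertex, every vertex is simplicial, and
$g_2(\Bar T)=n=g(\Bar T)$.

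The main obstacle is $\diam(T)=3$, where $T$ is a double star with centres
$u,v$ and leaves $a_1,\ldots,a_p$ at $u$ and $b_1,\ldots,b_q$ at $v$. In
$\Bar T$ the neighbourhood of $u$ equals the clique $\{b_1,\ldots,b_q\}$
and that of $v$ equals the clique $\{a_1,\ldots,a_p\}$, so both centres are
simplicial; Lemma~\ref{lm simplicial} then forces $\{u,v\} \subseteq S$ for
every 2-geodetic set. If $T$ has a vertex of degree $2$, WLOG $p=1$, and
$S=\{u,v,a_1\}$ is 2-geodetic because each $b_j$ has $u$ and $a_1$ as
$\Bar T$-neighbours and $\{u,a_1\}$ is a $T$-edge; together with the
trivial $g_2(\Bar T) \ge 3$ (no two-element set is 2-geodetic once
$n \ge 4$, since it would require an isolated $T$-edge) this yields the
value $3$. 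If no vertex has degree $2$ then $p,q \ge 2$, and any
three-element candidate $\{u,v,w\}$ with $w=a_1$ fails to cover $a_2$,
whose only $\Bar T$-neighbours in the set are $v$ and $a_1$ while
$\{v,a_1\}$ is itself a $\Bar T$-edge; the case $w=b_1$ is symmetric.
Hence $g_2(\Bar T) \ge 4$, matched by $S=\{u,v,a_1,b_1\}$, where the
$T$-edges $\{u,a_1\}$ and $\{v,b_1\}$ serve as witness pairs for the
remaining leaves $a_i$ ($i\ge 2$) and $b_j$ ($j\ge 2$).
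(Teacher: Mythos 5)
Your proof is correct and follows essentially the same route as the paper's: the trivial bound $g_2(\Bar T)\ge g(\Bar T)$ plus a verification, case by case along Lemma~\ref{lm cotree}, that the optimal geodetic sets there can be chosen 2-geodetic except when $\diam(T)=3$, which you treat separately via the double-star structure. The paper compresses all of this into one sentence, whereas you supply the explicit witness sets and the size-$3$/size-$4$ lower bounds for the $\diam(T)=3$ sub-cases, so your write-up is a fully detailed version of the same argument.
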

\begin{proof}
Checking the cases in the proof of Lemma~\ref{lm cotree} 
it follows that each of the geodetic sets in $\Bar{T}$ is actually a 
$2$-geodetic set unless $\diam(T)=3$. 
\qed\end{proof}

\bigskip
\noindent

\begin{remark}
The 2-geodetic number of a tree-cograph can be computed in linear time
  (see Appendix A).  
 \end{remark}
  
  Based on this result, we have the following theorem.
 
\medskip  
  
\begin{theorem}
There exists a linear-time algorithm that computes the geodetic 
number of tree-cographs. 
\end{theorem}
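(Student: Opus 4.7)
The plan is to exploit the recursive structure of tree-cographs. By definition, a tree-cograph $G$ is (i) disconnected (a disjoint union of smaller tree-cographs), (ii) such that $\Bar{G}$ is disconnected (so $G$ is a join of smaller tree-cographs), or (iii) a tree or co-tree. I would first build this decomposition tree in linear time, labelling internal nodes as union or join and leaves with the underlying tree or co-tree. Then I would traverse it bottom-up and maintain at each node both the geodetic number $g$ and the $2$-geodetic number $g_2$; the latter is available throughout in total linear time by the preceding Remark together with Lemma~\ref{lm 2-geod tree}, so the only substantive task is computing $g$.

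At a leaf that is a tree $T$, Lemma~\ref{lm tree} gives $g(T)$ as the number of leaves of $T$; at a leaf that is a co-tree $\Bar{T}$, Lemma~\ref{lm cotree} gives $g(\Bar{T})$ from $\diam(T)$ and simple degree information. Both values can be read off in time linear in the size of the leaf. At a union node the result is immediate, since every geodesic lies inside a single connected component, so
\[
g(G_1 \sqcup G_2) \;=\; g(G_1) + g(G_2).
\]

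The main obstacle is the join node $G = G_1 + G_2$. Here $\diam(G)\leq 2$, so any $v \notin S$ must be the interior vertex of a geodesic of length exactly two; equivalently, $v$ must have two neighbors in $S$ that are non-adjacent in $G$. Writing $S = S_1 \cup S_2$ with $S_i \subseteq V(G_i)$, a short case analysis shows that $S$ is geodetic in $G$ iff (a) every $z \in V(G_1) \setminus S_1$ has two $G_1$-non-adjacent neighbors in $S_1$ \emph{or} $S_2$ contains two $G_2$-non-adjacent vertices, together with (b) the symmetric condition on $V(G_2)$. The minimum $|S|$ is therefore obtained by choosing, on each side, either a $2$-geodetic set or merely a ``non-edge witness'' (two vertices non-adjacent in $G_i$), and combining them feasibly. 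Each option is a constant-size combination of $g_2(G_1)$, $g_2(G_2)$, and the cheapest non-edge in each $G_i$, so the node is processed in $O(1)$ time on top of the $g_2$-computation.

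The delicate point will be the boundary behaviour of this minimum formula: when some $G_i$ is a clique (for instance when the leaf $\Bar{T}$ corresponds to a star $T$), no non-edge witness exists inside $G_i$ and the corresponding options must be discarded; one must also check that the minimum $2$-geodetic set produced by a child already contains the non-edge required to discharge the companion condition on the other side. Once these finitely many corner cases are settled, induction along the decomposition tree closes the argument, and summing the constant-time work over all nodes yields an algorithm that runs in linear time.
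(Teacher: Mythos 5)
Your proposal is correct and follows essentially the same route as the paper: leaves are handled by Lemmas~\ref{lm tree} and~\ref{lm cotree}, union nodes by additivity, and join nodes by reducing to the $2$-geodetic numbers of the parts capped at $4$, with the clique corner cases treated separately. The only cosmetic difference is that you phrase the join step over a binary decomposition and derive the formula from a feasibility analysis of which side supplies the ``non-edge witness,'' whereas the paper considers all components of $\Bar{G}$ at once and cases directly on whether $g(G)\in\{2,3\}$; unfolded, both give the same formula~\eqref{eqn2}.
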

\begin{proof}
Assume that $G$ is a tree-cograph. If $G$ or $\Bar{G}$ is a tree 
then the claim follows from Lemmas~\ref{lm tree} and~\ref{lm cotree}. 

\medskip 

\noindent 
Assume that $G$ is disconnected. Let $C_1,\dots,C_t$ 
be the components. 
Then 
\[g(G)=\sum_{i=1}^t g(C_i),\] 
where we write $g(C_i)$ instead of $g(G[C_i])$ for convenience. 

\medskip 

\noindent 
Assume that $\Bar{G}$ is disconnected, and let $C_1,\dots,C_t$ 
be the components of $\Bar{G}$. Assume that the components 
are ordered, such that 
\[|C_i| \geq 2 \quad\text{if and only if}\quad 1 \leq i \leq k.\] 
We claim that  
\begin{equation}
\label{eqn2}
g(G)=
\begin{cases}
n & \text{if $k=0$}\\
g_2(C_1) & \text{if $k=1$}\\
\min\;\{\;4,\;g_2(C_i)\; \mid\; 1 \leq i \leq k\;\} & \text{if $k \geq 2$.}
\end{cases}
\end{equation}
To prove the claim, first observe that, when $k=0$, $G$ is a 
clique and $g(G)=n$. Assume that $k=1$. 
Let $D$ be a minimum $2$-geodetic set in $G[C_1]$. Then it contains 
two nonadjacent vertices of $C_1$. Then $D$ is a geodetic set in $G$. 
Now let $D^{\prime}$ be a minimum geodetic set in $G$. 
It contains two nonadjacent vertices, which must be in $C_1$. 
Then $D^{\prime} \cap C_1$ is a $2$-geodetic set in $G[C_1]$. 

\medskip 

\noindent 
Assume that $k \geq 2$. 
Any 4 vertices, of which two are a nonadjacent 
pair in $C_1$ and another two are a nonadjacent pair in $C_2$, 
form a geodetic 
set in $G$. Thus $g(G) \leq 4$. Assume that $g(G)=2$. 
Then a minimum geodetic set must consist of two nonadjacent vertices, 
which 
must be contained in one $C_i$. 
Those form a $2$-geodetic set in $G[C_i]$ 
and so Formula~\eqref{eqn2} holds true. Assume that $g(G)=3$. 
It cannot be that two are a nonadjacent pair in one component and 
the third vertex is in another component, since then the two would 
be a geodetic set, contradicting the minimality. Therefore, the three 
must be a contained in one component. This proves Formula~\eqref{eqn2}.

\medskip 

\noindent 
This proves the theorem.  
\qed\end{proof}

\medskip 


\section{Steiner sets in $\AT$-free graphs}
\label{section:at_free}

Asteroidal triples were introduced by Lekkerkerker and Boland to 
identify those chordal graphs that are interval graphs~\cite{kn:lekkerkerker} 
(see also, eg,~\cite{kn:kloks}). 

\begin{center}
\begin{boxedminipage}{12cm}
An asteroidal triple, $\AT$ for short, 
is a set of 3 vertices $\{x,y,z\}$ such that there 
exists a path connecting any pair of them that avoids the closed 
neighborhood of the third. 
\end{boxedminipage}
\end{center}
A graph is \underline{$\AT$-free} if it has no asteroidal triple. 
Well-known examples of $\AT$-free graphs are cocomparability graphs, 
that is, the complements of comparability graphs. However, 
$\AT$-free graphs need not be perfect; the $C_5$ is $\AT$-free. 

%
%
%

\bigskip 

\begin{definition}
A dominating pair in a connected graph is a pair of vertices 
such that every path between them is a dominating set. 
\end{definition}

The following result was proved in~\cite{kn:broersma}. 


\begin{lemma}
\label{lm caterpillar}
Let $G$ be an $\AT$-free graph and let $W \subseteq V$. 
Let $T$ be a Steiner $W$-tree. There exists a Steiner $W$-tree 
$T^{\prime}$ with $V(T)=V(T^{\prime})$ and such that 
$T^{\prime}$ is a caterpillar.
\end{lemma}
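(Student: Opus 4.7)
The plan is to invoke the classical dominating pair theorem for $\AT$-free graphs (every connected $\AT$-free graph admits a pair of vertices such that every path between them is a dominating set), but applied to the induced subgraph $G[V(T)]$ rather than to $G$ itself. First I would verify that the hypotheses are met: $G[V(T)]$ contains $T$ as a spanning subgraph and is therefore connected, and it inherits $\AT$-freeness from $G$ since the $\AT$-free property is hereditary. Thus $G[V(T)]$ admits a dominating pair $(x,y)$.

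Next, I would let $P$ be any shortest $x$-$y$ path in $G[V(T)]$. By the defining property of a dominating pair, $V(P)$ is a dominating set of $G[V(T)]$, so every vertex of $V(T) \setminus V(P)$ has at least one neighbor on $P$. I would then construct $T^{\prime}$ by taking $P$ as the spine and, for each vertex $v \in V(T) \setminus V(P)$, adding a single edge from $v$ to an arbitrarily chosen neighbor on $P$. Deleting the leaves of $T^{\prime}$ leaves the path $P$, so $T^{\prime}$ is a caterpillar on $V(T)$.

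It remains to check that $T^{\prime}$ is still a Steiner $W$-tree. A direct edge count gives $(|V(P)|-1) + (|V(T)| - |V(P)|) = |V(T)| - 1 = |E(T)|$, so $T^{\prime}$ is a tree on the vertex set $V(T) \supseteq W$ using the same number of edges as the original minimum Steiner tree $T$. Hence $T^{\prime}$ is itself a minimum Steiner $W$-tree, with $V(T^{\prime}) = V(T)$, as required. The only conceptual hurdle is recognising that the statement only fixes the vertex set and not the tree structure; once this is noted, the spine is handed to us by the dominating pair theorem and the edge count matches automatically, so no delicate exchange argument on $T$ is needed.
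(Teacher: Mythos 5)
Your proof is correct and is essentially the intended argument: the paper does not prove this lemma itself (it cites Broersma et al.), but it introduces the notion of a dominating pair immediately beforehand, which is exactly the tool you use --- pass to the connected, hereditarily $\AT$-free induced subgraph $G[V(T)]$, take a path between a dominating pair as the spine, hang every remaining vertex as a leaf, and observe that the edge count $|V(T)|-1$ is unchanged, so minimality is preserved. The only cosmetic imprecision is that deleting the leaves of $T^{\prime}$ may also remove the endpoints of $P$, leaving a subpath of $P$ rather than $P$ itself, which of course still certifies that $T^{\prime}$ is a caterpillar.
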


\bigskip 

\begin{theorem}
Let $G$ be $\AT$-free. 
Let $W$ be a Steiner set of $G$. 
Then $W$ is also a geodetic set. So, for $\AT$-free graphs $G$ holds 
that $g(G) \leq s(G)$.  
\end{theorem}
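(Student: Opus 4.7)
The plan is to show that every $v \in V$ belongs to $I(W)$. The case $v \in W$ is immediate. For $v \notin W$, since $W$ is a Steiner set, $v$ lies on some Steiner $W$-tree $T$; by Lemma~\ref{lm caterpillar} we may replace $T$ by a caterpillar Steiner $W$-tree $T'$ with $V(T') = V(T)$, so $v \in V(T')$ and $T'$ is itself minimum. A first observation is that every leaf of $T'$ lies in $W$: otherwise, pruning the leaf gives a strictly smaller Steiner $W$-tree, contradicting minimality. In particular, $v \notin W$ cannot be a leaf of $T'$, so $v$ lies on the spine; write the spine as $a = v_0, v_1, \dots, v_\ell = b$ with $v = v_i$. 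The endpoints $a, b$ are leaves of $T'$, hence belong to $W$.

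The goal is then to place $v$ on an $a, b$-geodesic in $G$. The upper bound $d_G(a, v) + d_G(v, b) \leq d_{T'}(a, v) + d_{T'}(v, b) = \ell$ is automatic since $T' \subseteq G$. If one can additionally arrange that the spine $v_0 v_1 \dots v_\ell$ is a shortest $a, b$-path in $G$, i.e., $d_G(a, b) = \ell$, then the triangle inequality $d_G(a, b) \leq d_G(a, v) + d_G(v, b)$ collapses all three estimates to equality, so $d_G(a,v) + d_G(v,b) = d_G(a,b)$. This places $v$ on some $a, b$-geodesic in $G$ and yields $v \in I(W)$, as required.

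The main obstacle is exactly to show that the caterpillar $T'$ can be chosen with its spine being an $a, b$-geodesic in $G$. I would prove this by exploiting the freedom in Lemma~\ref{lm caterpillar}: among all caterpillar Steiner $W$-trees with the same vertex set as $T$, select one whose spine has minimum length. If this minimal spine were still longer than $d_G(a, b)$, one could splice in a strictly shorter $a, b$-path from $G$ and re-attach the off-spine leaves of $T'$ (all of which lie in $W$, and hence must appear in every Steiner $W$-tree) to the new backbone, producing a caterpillar Steiner $W$-tree with a strictly shorter spine and contradicting the choice. The delicate part is controlling the re-attachment so that the resulting object is still a tree, still contains $W$, and still has the minimum number of edges; this is where the AT-free hypothesis is essential, in the same spirit as in the proof of Lemma~\ref{lm caterpillar} itself.
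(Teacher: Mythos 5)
Your setup mirrors the paper's: reduce to a caterpillar Steiner $W$-tree via Lemma~\ref{lm caterpillar}, observe that every leaf lies in $W$ so that $v$ sits on the spine, and note that once the relevant spine segment is a geodesic the triangle inequality finishes the job. The gap is the step you defer to the end: the claim that the spine can be chosen to be an $a,b$-geodesic is not merely delicate, it is false, and the splice-and-reattach argument is exactly where it breaks. What AT-freeness buys (the paper's condition~\eqref{eqn:10year}) is that every vertex of the long segment $Q$ of the spine, and every leaf attached to an \emph{interior} vertex of $Q$, has a neighbour on the shortcut $Q^{\prime}$; it does \emph{not} force this for a leaf $w_1$ pendant only at $x_1$ or $y_1$, the spine-neighbours of the shortcut's endpoints, because when the shortcut's endpoint coincides with the spine endpoint $x$ the would-be asteroidal triple $(x,w_1,y)$ cannot be completed. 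Concretely, in the paper's case (iii) --- shortcut of length $q-2$ between the spine endpoints themselves, with leaves $w_1,w_2$ pendant only at $x_1$ and $y_1$ --- the replacement backbone must be re-extended through $x_1$ and $y_1$ to keep $w_1,w_2$ attached, the edge count does not drop, and the resulting minimum caterpillar has a spine strictly longer than $d_G(x,y)$. Your minimum-spine-length selection does not escape this: the bad spine and the good one (from $w_1$ to $w_2$) have the same length, and applied to the bad one your splicing argument would ``derive'' a contradiction from a shorter $x,y$-path that genuinely exists without the tree being improvable.

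What the paper does instead in these residual cases is change the target pair: it shows $z$ lies on a shortest path between the pendant leaves themselves ($w_1$ and $w_2$ in the symmetric case, or $w_1$ and the next $W$-vertex $w^{\prime\prime}$ along the spine in the one-sided case), rather than between the spine endpoints. Establishing that these paths are genuinely shortest is a separate AT-free argument: a strictly shorter $w_1,w_2$-path must either pass through neighbours of both $x$ and $y$, which forces its length back up to $q$, or avoid $N(x)$ or $N(y)$ entirely, which creates an asteroidal triple such as $(x,w_1,y)$. This distance lower bound for the leaf-to-leaf pairs, not the re-attachment of leaves to a shortened backbone, is the missing core of the proof.
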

\begin{proof}
Let $z \in V \setminus W$. We prove that there exist 
vertices $w^{\prime},w^{\prime\prime} \in W$ such that 
$z\in I(\{w^{\prime},w^{\prime\prime}\})$. 

\medskip 

\noindent
By assumption, there exists a Steiner $W$-tree $T$ that contains $z$, 
and by Lemma~\ref{lm caterpillar} we may assume that $T$ is a 
caterpillar. 
Let $P$ denote the backbone of the caterpillar, and let $x$ and $y$ be the endpoints of $P$. 
We may also assume that $P$ is chordless, and $x$ and $y$ are leaves of $T$.  
Notice that all leaves of $T$ are elements of $W$.

\medskip 

\noindent 
Since $z\in V\setminus W$, we have $z \in V(P)$. 
Let $w^{\prime}$ and $w^{\prime\prime}$ be two vertices on 
opposite sides of $z$ such that 
\[z \in P[w^{\prime},w^{\prime\prime}] \quad \text{and}\quad 
P[w^{\prime},w^{\prime \prime}] \cap W = \{\;w^{\prime},\;w^{\prime\prime}\;\}.\] 
If $P[w^{\prime},w^{\prime \prime}]$ is not a shortest $w^{\prime},w^{\prime\prime}$-path in $G$, we claim that there is a cycle, which is the union of two paths $Q$ and $Q^{\prime}$ in $G$, such that
\begin{itemize}
\item $Q=P[x^{\prime}, y^{\prime}]$
\item $z\in V(Q)$
\item $Q^{\prime}$ is a shortest $x^{\prime}, y^{\prime}$-path in $G$.
\item $V(Q^{\prime})\cap V(P)=\{x^{\prime}, y^{\prime}\}$
\item $|V(Q^{\prime})|<|V(Q)|$
\end{itemize}  

\medskip 

\noindent 
Let $P^{\prime}$ be a shortest $w^{\prime},w^{\prime\prime}$-path in $G$, and let $V(P)\cap V(P^{\prime})=\{a_1, a_2,\dots,a_k\}$, where elements are ordered according to the traversal of $P^{\prime}$ from $w^{\prime}$ to $w^{\prime\prime}$. Clearly, $k\geq 2$, and for $i\in[k-1]$ we have that $P^{\prime}[a_i, a_{i+1}]$ is a chordless path in $G$. Let $i^*$ be the least integer such that $z\in V(P[a_{i^*},a_{i^*+1} ])$. It follows that  $P^{\prime}[a_{i^*}, a_{i^*+1}]$ is shortest and shorter than $P[a_{i^*}, a_{i^*+1}]$.\footnote{Otherwise, we can shorten $P$ by replacing $P^{\prime}[a_{j}, a_{j^{\prime}}]$ by $P[a_{j}, a_{j^{\prime}}]$, where $j\leq i^*$, $j^{\prime}\geq i^*+1$, and $P[a_{j}, a_{j^{\prime}}]$ contains no $a_i$ other than $a_{j}$ and $a_{j^{\prime}}$.}
By letting $x^{\prime}=a_{i^*}$, $y^{\prime}=a_{i^*+1}$, and $Q^{\prime}=P^{\prime}[x^{\prime}, y^{\prime}]$, we have the requested cycle. 


\medskip 

\noindent 
In the following, we develop the case where $V(P[x^{\prime}, y^{\prime}])\cap W\subseteq \{x^{\prime}, y^{\prime}\}$. The case where $P[x^{\prime}, y^{\prime}]$ contains more vertices of $W$ can be manipulated in a similar manner.\footnote{In this case, we choose $Q$ as the maximal subpath of $P[x^{\prime}, y^{\prime}]$ containing no vertices in $W$ except the endpoints, and $Q^{\prime}$ be the remainder of the cycle.}
Next, we claim that all vertices of $Q$ and all leaves attached to these vertices, except those attached to only $x_1$ or $y_1$, are adjacent to some vertex of $Q^{\prime}$, where $N(x^{\prime})\cap V(Q)=\{x_1\}$ and $N(y^{\prime})\cap V(Q)=\{y_1\}$.
Formally, let $L$ be the set of leaves of $T$. The claim states that
\begin{equation}\label{eqn:10year}
\forall_{v\in V(Q)\cup L^{\prime}}\quad N[v]\cap V(Q^{\prime})\neq\es,
\end{equation}
where $L^{\prime}=\{\,v\in L \mid N(v)\cap (V(Q)\setminus\{x_1, y_1\})\neq\es\,\}$. 
To see that, assume that $u\in V(Q)\cup L^{\prime}$ and $N(u) \cap V(Q^{\prime})=\es$. Then $\{x^{\prime}, y^{\prime},u\}$ forms an asteroidal triple, since there is a $u,x^{\prime}$-path that avoids $N[y^{\prime}]$ via $Q$, a $u,y^{\prime}$-path that avoids $N[x^{\prime}]$ via $Q$ and an $x^{\prime}, y^{\prime}$-path that avoids $N[u]$ 
via $Q^{\prime}$. Thus, condition~\eqref{eqn:10year} holds. Moreover, let $q$ and $q^{\prime}$ be the length of $Q$ and $Q^{\prime}$, respectively. We may assume that $q\leq q^{\prime}+2$ since otherwise we can replace $Q$ with $Q^{\prime}$ concatenated with $x_1$ and $y_1$ to get a smaller tree containing $W$.

\medskip
\noindent
Consider the following cases.

\begin{enumerate}[(i)]
\item $|\{x^{\prime}, y^{\prime}\}\cap \{x, y\}|=0$: For each leaf $u$ of $T$ adjacent to only $x_1$ or $y_1$, there is an asteroidal triple involving these endpoints, i.e. $(x, u, y)$, unless $N(u)\cap Q^{\prime}\neq\es$. 
Thus, with~\eqref{eqn:10year}, we can replace $Q$ with $Q^{\prime}$ to form the backbone of a caterpillar containing $W$. This contradicts the minimality of $T$.

\medskip

\item $|\{x^{\prime}, y^{\prime}\}\cap \{x, y\}|=1$: Assume that $x^{\prime}=x$. Similar to the previous case, for each leaf $u$ attached only at $y_1$, we have $(x, u, y)$ as an asteroidal triple, unless $u$ is adjacent to some vertex of $Q^{\prime}$. For $q^{\prime}=q-2$, or there is no leaf attached at $x_1$, this leads to a contradiction, as in the previous case. Thus, we assume that there is a leaf $w_1$ attached only at $x_1$ and $q^{\prime}=q-1$. Notice that $N(w_1)\cap (V(P)\cup V(Q^{\prime}))=\{x_1\}$. If $z=x_1$, then clearly $z$ is on a shortest $x,w_1$-path. If $z\neq x_1$, we claim that $z$ is on a shortest $w_1, w^{\prime\prime}$-path. 

\medskip

\noindent
To see this,  suppose to the contrary that 
\begin{equation}\label{eq:single_end_distance}
d(w_1, w^{\prime\prime})<q+d(y^{\prime}, w^{\prime\prime}).
\end{equation}
If there is a shortest $w_1, w^{\prime\prime}$-path passing through a neighbor $v$ of $x$, then 
\[d(w_1, w^{\prime\prime})\geq 2+q^{\prime}-1+d(y^{\prime}, w^{\prime\prime})=q+d(y^{\prime}, w^{\prime\prime}),\]
since this path is of the form $w_1\leadsto u\leadsto v\leadsto y\leadsto w^{\prime\prime}$, where $u$ is a neighbor of $w_1$. This contradicts~\eqref{eq:single_end_distance}, and we may assume that no shortest $w_1, w^{\prime\prime}$-path contains a neighbor of $x$. However, this leads to the existence of an asteroidal triple $(w_1, x, y)$, again a contradiction. This proves the claim. 

\medskip

\item $|\{x^{\prime}, y^{\prime}\}\cap \{x, y\}|=2$: We can shorten $T$ by replacing the backbone if the leaves attached at neither $x_1$ nor $y_1$. If $q^{\prime}=q-1$, or exactly one of $x_1$ and $y_1$ has a neighbor in $L\setminus L^{\prime}$, then as in (ii), we have that $z$ is on a shortest path between two vertices in $W$. The only case left of interest is when $q^{\prime}=q-2$, and there are leaves, $w_1$ and $w_2$, attached at $x_1$ and $y_1$, respectively. If $z\in\{x_1, y_1\}$, then $z$ is on the shortest $x, w_1$-path or $y, w_2$-path. Otherwise, we claim that $z$ is on a shortest $w_1,w_2$-path. 

\medskip
\noindent
To see this, suppose to the contrary that 
\begin{equation}\label{eq:double_side_distance}
d(w_1, w_2)<q.
\end{equation}
If there is a shortest $w_1,w_2$-path passing through two vertices which are neighbors of $x$ and $y$,  respectively, then we have
\[d(w_1, w_2)\geq 2+q^{\prime}-2+2=q,\] 
since such a path is of the form $w_1\leadsto u_1\leadsto v_1\leadsto v_2\leadsto u_2\leadsto w_2$, where $u_1\in N(w_1)$, $v_1\in N(x)$, $v_2\in N(y)$, and $u_2\in N(w_2)$. This contradicts~\eqref{eq:double_side_distance}. However, if each shortest $w_1,w_2$-path contains no neighbor of $x$ or no neighbor of $y$, then $(x, w_1, y)$ or $(y, w_2, x)$ is an asteroidal triple, again a contradiction.

\end{enumerate}

\noindent
This proves the theorem. 
\qed\end{proof}
 
\bigskip 
\noindent
Although $g(G)\leq s(G)$ when $G$ is $\AT$-free, the equality is not guaranteed to hold even for subclasses like unit-interval graphs, as shown in Theorem~\ref{thm:unit_interval}.

\medskip

\begin{theorem}\label{thm:unit_interval}
For unit-interval graphs $G$, the geodetic number $g(G)$ and the Steiner 
number $s(G)$ are, in general, not equal.
Moreover, the difference between the two numbers can be arbitrarily 
large.
\end{theorem}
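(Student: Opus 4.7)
To prove Theorem~\ref{thm:unit_interval}, the plan is to exhibit an explicit family $\{G_k\}_{k\ge 1}$ of unit-interval graphs for which $s(G_k)-g(G_k)\to\infty$, growing at least linearly in $k$. I would assemble $G_k$ from a long unit-interval ``backbone'' together with $k$ identical small gadgets attached at equally spaced positions along the linear order. Each gadget is chosen so that it contains a designated vertex $z_i$ which (i) lies on a geodesic between ``core'' vertices of the backbone and therefore belongs to $I(W)$ for the natural small geodetic set $W$, but (ii) is bypassed by every minimum Steiner tree spanning the core, so that $z_i\notin S(W)$ unless $W$ is enlarged to include a witness inside the gadget.

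First I would write down an explicit unit-interval representation of $G_k$ with all intervals of equal length and verify that the resulting graph is both claw-free and interval. Next I would establish an upper bound $g(G_k)\le f(k)$ by exhibiting a canonical geodetic set consisting of the simplicial endpoints of the backbone together with a few auxiliary witnesses, and verifying that the union of their pairwise geodesics sweeps through every $z_i$; here I would exploit the fact that unit-interval graphs generally admit many alternative shortest paths between far-apart pairs, so that each $z_i$ is captured ``for free''. Then for the Steiner lower bound I would argue that any Steiner set must contain at least one vertex inside each gadget: if $z_i\notin W$, then any minimum Steiner $W$-tree, which decomposes as a ``trunk'' following a shortest path between the leftmost and rightmost elements of $W$ plus short branches reaching the remaining elements of $W$, necessarily follows the backbone past the $i$-th gadget and thus avoids $z_i$. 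Ranging this argument over all $k$ gadgets forces $s(G_k)\ge f(k)+k$, yielding the claimed unbounded gap.

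The main obstacle will be the Steiner lower bound. Unlike the geodetic side, where it suffices to exhibit one geodesic through each $z_i$, the Steiner side requires simultaneously controlling every minimum Steiner $W$-tree and ruling out any clever detour that might cover $z_i$ without increasing the edge count. The key technical ingredient is a ``shape lemma'' for minimum Steiner trees in unit-interval graphs --- roughly, that the trunk is a shortest backbone path --- together with a careful cost accounting showing that any traversal which enters a gadget far enough to touch $z_i$ strictly increases the total tree size. Designing each gadget so that its rich geodesic structure (for the upper bound on $g$) coexists with a rigid trunk structure (for the lower bound on $s$) is where most of the technical care has to be invested.
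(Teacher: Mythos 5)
Your proposal is a plan rather than a proof, and its two load-bearing components are both missing. First, the gadget is never exhibited: you describe the properties you want the designated vertex $z_i$ to have (on a geodesic between core vertices, yet avoided by every minimum Steiner tree), but you do not give a single concrete interval representation realizing them, and you yourself flag the tension between these two requirements as unresolved. Second, the entire Steiner lower bound hangs on a ``shape lemma'' asserting that every minimum Steiner tree has a trunk that is a shortest backbone path plus short branches; this is neither stated precisely nor proved, and you explicitly identify it as ``the main obstacle.'' In unit-interval graphs shortest paths and minimum Steiner trees are built from the same left-to-right sweeps, so a vertex that is reachable ``for free'' on a geodesic between two backbone vertices is typically also usable as an internal vertex of a minimum Steiner tree; without a concrete mechanism that breaks this symmetry, the construction cannot be completed. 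As it stands, the argument for $s(G_k)\geq f(k)+k$ is a conjecture.

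The paper's proof shows how little machinery is actually needed. It takes one fixed graph on nine unit intervals and checks by hand that $\{I_1,I_3,I_4,I_7\}$ is geodetic (so $g\leq 4$) while every Steiner set has at least five vertices. The symmetry you are struggling to break is broken there by \emph{true twins}: the intervals $I_4=I_5=I_6$ and $I_7=I_8=I_9$ are interchangeable on geodesics (so one representative per twin class suffices for geodetic coverage), but a twin of a vertex already in $W$ is redundant in, hence excluded from, every \emph{minimum} Steiner $W$-tree, which forces all twins into any Steiner set. Combined with the observation that the simplicial vertices $I_1,I_3$ must belong to every Steiner set, a short case analysis gives $s\geq 5$ with no structural theorem about Steiner trees whatsoever. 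The unbounded gap is then obtained by taking many disjoint copies of this one example, so that both parameters simply add up. If you want to salvage your connected-backbone construction, the twin mechanism is the ingredient to import; otherwise the disjoint-copies shortcut makes the global shape lemma unnecessary.
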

\begin{proof}
Consider a set of seven unit-length intervals as depicted in 
Figure~\ref{fig ui counter example 2}.  Let $G$ be the interval graph 
corresponding to these intervals; for ease of discussion, we abuse 
the notation slightly to refer to a vertex in $G$ by the label of its 
corresponding interval.  
It is easy to check that $g(G) \leq 4$, as 
$\{I_1, I_3, I_4, I_7\}$ 
forms a geodetic set.  
We shall show that  $s(G) > 4$, so that the first 
statement of the theorem follows.  

\medskip 

\noindent
First, $I_1$ and $I_3$ are simplicial vertices, 
so that any minimal Steiner set must 
include $I_1$ and $I_3$.  
On the other hand, $I_1$ and $I_3$ alone do not form a Steiner set, 
since $\{I_1,I_2,I_3\}$ forms the only Steiner $\{I_1,I_3\}$-tree 
and, eg, $I_4$ does not lie on that. 

\medskip 

\noindent
Next, if a Steiner set includes $I_2$, it has 
to include all the remaining vertices 
(because the subgraph induced by any superset 
of $\{I_1,I_2,I_3\}$ is connected,  
and so, any remaining vertex would not be included 
in a Steiner tree). 
Similarly, if a Steiner set includes $\{I_1,I_3,I_4,I_7\}$ then 
it has to include all the remaining vertices.

\medskip 

\noindent
Thus, either a Steiner set has size $9$, or it must 
include $I_4$ but not $I_7$ (or vice versa);  further, 
if a Steiner set includes $I_4$, then it has to 
include $I_5$ and $I_6$ also.  Thus, $s(G) \geq 5$.
This completes the proof of the first statement.
To show the second statement, it suffices to duplicate arbitrarily many 
disjoint (i.e., non-overlapping) copies of 
the set of intervals in our example.
\qed
\end{proof}

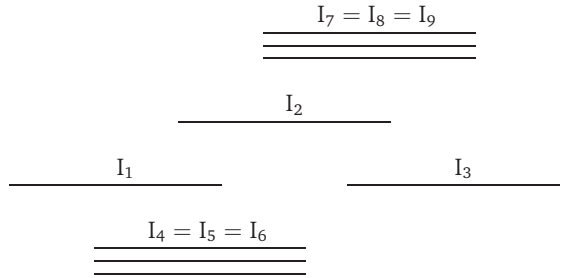
\begin{figure}[t]
\setlength{\unitlength}{8pt}
\begin{center}
\begin{picture}(26,10)
\put(0,3){\line(1,0){10}}
\put(5,3.5){$I_1$}
\put(8,6){\line(1,0){10}}
\put(13,6.5){$I_2$}
\put(16,3){\line(1,0){10}}
\put(21,3.5){$I_3$}
\put(4,0){\line(1,0){10}}
\put(4,-0.6){\line(1,0){10}}
\put(4,-1.2){\line(1,0){10}}
\put(6.5,0.5){$I_4 = I_5 = I_6$}
\put(12,9){\line(1,0){10}}
\put(12,9.6){\line(1,0){10}}
\put(12,10.2){\line(1,0){10}}
\put(14.5,10.7){$I_7 = I_8 = I_9$}
\end{picture}
\end{center}
\caption{An example showing that $g(G) \neq s(G)$ for 
unit-interval graphs in general.  
Our example graph $G$ consists of nine unit intervals:  
$I_1 = [0.0, 1.0]$, $I_2 = [0.8, 1.8]$, $I_3 = [0.6, 2.6]$, 
$I_4 = I_5 = I_6 = [0.4, 1.4]$, $I_7 = I_8 = I_9 = [1.2, 2.2]$, 
and it follows that $g(G) \leq 4 < s(G)$.}
\label{fig ui counter example 2}
\end{figure}

%

\section{Monophonic sets in $\AT$-free graphs}
\label{section:monophonic}


Let $G=(V,E)$ be a graph. 
For a set $S \subseteq V$ the monophonic closure is the set 
\[J(S)=\{\;z\;|\; \exists_{x,y \in S} \;\text{$z$ lies on a chordless 
$x,y$-path}\;\}.\] 
For pairs of vertices $x$ and $y$ we write $J(x,y)$ instead of $J(\{x,y\})$.  
A set $S$ is \underline{monophonic} if $J(S)=V$. 
The monophonic number $m(G)$ of $G$ is the minimal cardinality of a 
monophonic set in $G$. 
Some complexity results on monophonic convexity appear in~\cite{kn:dourado4}. 
Computing the monophonic number of a graph in general is $\NP$-complete. 
In cographs, as in distance-hereditary graphs, 
chordless paths are geodesics, and so, 
$m(G)=g(G)$. 
The monophonic hull number turns out to be polynomial; actually, if a graph 
is not a clique and contains no clique separator, then it is called 
an atom, and in atoms, every pair of 
nonadjacent vertices forms a monophonic hull set~\cite{kn:duchet} 
and~\cite[Theorem~5.1]{kn:dourado4} 

\bigskip

\begin{theorem}[See~\cite{kn:hernando}]
Let $G$ be connected. Then every Steiner set in $G$ is monophonic. 
Consequently, 
\[m(G) \leq s(G).\] 
\end{theorem}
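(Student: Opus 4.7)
The plan is to establish the pointwise inclusion $S(W)\subseteq J(W)$ for an arbitrary Steiner set $W$; the bound $m(G)\le s(G)$ then follows at once by applying the inclusion to a minimum Steiner set. Concretely, for each $z\in V\setminus W$ I will exhibit two vertices $w',w''\in W$ together with a chordless $w',w''$-path in $G$ that passes through $z$.

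Fix $z\in V\setminus W$. Because $W$ is a Steiner set, $z$ lies in some Steiner $W$-tree $T$. A standard pruning argument forces every leaf of $T$ to belong to $W$, for otherwise the offending leaf could be deleted to yield a tree with strictly fewer edges still containing $W$. Consequently $z$ is an internal vertex of $T$ with $T$-degree at least two, so $T-z$ splits into at least two subtrees, each of which must contain a leaf of $T$, hence a vertex of $W$. Pick $w',w''\in W$ in two different subtrees; the unique $w',w''$-path in $T$ automatically passes through $z$, giving a candidate $w',w''$-path in $G$ containing $z$. To upgrade it to a chordless path, I minimise over all triples $(w',w'',Q)$, where $w',w''\in W$ and $Q$ is any $w',w''$-path in $G$ through $z$, by taking $Q$ of minimum length; I may further assume no internal vertex of $Q$ lies in $W$ (else a sub-instance of $Q$ gives a strictly smaller example). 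If $Q=v_0,\dots,v_k$ possesses a chord $v_av_b$ whose endpoints lie on the same side of $z$, then the shortcut $v_0,\dots,v_a,v_b,\dots,v_k$ is a strictly shorter $w',w''$-path that still contains $z$, contradicting the minimality of $Q$.

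The main obstacle is ruling out a ``straddling'' chord $v_av_b$ with $v_a,v_b$ on opposite sides of $z$ in $Q$, because the shortcut through such a chord now \emph{bypasses} $z$ and therefore does not directly contradict the minimality of $Q$. To close this case I would turn back to the Steiner tree $T$ and exploit its minimality. Any interior vertex of the straddling subpath either carries a $T$-branch reaching a further leaf of $T$, in which case rerouting $w'$ or $w''$ to a $W$-vertex in that branch produces an even shorter $w',w''$-path through $z$, again contradicting the minimality of $Q$; or else all such interior vertices have $T$-degree two, in which case the chord $v_av_b$ permits us to replace the entire subpath inside $T$ by the edge $v_av_b$, producing a Steiner $W$-tree with strictly fewer edges that omits $z$ and contradicting the minimality of $T$. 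Either horn yields a contradiction, so $Q$ must be chordless, whence $z\in J(W)$. This proves $S(W)\subseteq J(W)$, so the Steiner set $W$ is monophonic, and taking a minimum Steiner set gives $m(G)\le s(G)$.
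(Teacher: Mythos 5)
The paper itself gives no proof of this statement (it is quoted from Hernando et al.\ with a citation), so there is nothing in-text to compare against; judged on its own terms, your attempt has a genuine gap exactly where the difficulty of the theorem lies. Your opening steps are fine: every leaf of a Steiner $W$-tree $T$ lies in $W$, so $z$ is internal in $T$ and the $T$-path between $W$-vertices in two different components of $T-z$ yields some $w',w''$-path of $G$ through $z$; and minimizing the length of a $w',w''$-path $Q$ through $z$ over all choices of $w',w''\in W$ does eliminate chords whose endpoints lie on the same side of $z$. The problem is the straddling chord. Once you pass from the tree path to a globally minimal path $Q$ in $G$, the vertices of $Q$ strictly between $v_a$ and $v_b$ need not belong to $T$ at all, so the dichotomy you propose --- ``carries a $T$-branch reaching a further leaf'' versus ``has $T$-degree two'' --- is not even defined for them, and the surgery ``replace the entire subpath inside $T$ by the edge $v_av_b$'' presupposes that this subpath of $Q$ is a path of $T$, which nothing guarantees. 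Even in the first horn the conclusion does not follow: rerouting an endpoint to a $W$-vertex hanging off a $T$-branch gives no control on the length of the resulting path, so it does not contradict the minimality of $Q$.

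The straddling case is not a technicality that can be waved away: in the graph with vertex set $\{w',a,z,b,w''\}$ and edges $w'a$, $az$, $zb$, $bw''$, $ab$, the unique shortest $w',w''$-path through $z$ carries the straddling chord $ab$, and $z$ indeed lies on no chordless $w',w''$-path --- but $z$ also lies in no Steiner $\{w',w''\}$-tree, because $[w',a,b,w'']$ is shorter. So minimality of $Q$ alone can never exclude straddling chords; a correct proof must keep the chord analysis coupled to the minimality of $T$ throughout, for instance by working only with paths extracted from $T$ (with its pendant branches accounted for) and showing that a straddling chord would let one assemble a connected subgraph containing $W$ with strictly fewer edges than $T$. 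Your minimization over all paths of $G$ severs exactly that coupling, and the key case remains unproved.
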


\bigskip 

\begin{remark}
Let $c_m(G)$ denote the cardinality of a maximum, proper,  
monophonically convex subset of $G$. 

\begin{theorem}
Computing $c_m(G)$ is $\NP$-complete for $\AT$-free graphs. 
\end{theorem}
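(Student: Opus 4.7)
I would first show the problem is in NP. A certificate is a set $S \subseteq V$; I verify in polynomial time that $|S| \geq k$, that $S \neq V$, and that $J(S) = S$. The convexity check amounts to verifying, for each $v \in V \setminus S$, that $v$ lies on no chordless path between two vertices of $S$; since for any triple $(x,v,y)$ the existence of an induced $x,y$-path through $v$ can be decided in polynomial time, the whole verification runs in polynomial time.

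For $\NP$-hardness, my plan is a reduction from a known $\NP$-complete problem---most naturally 3-SAT, though a reduction from the general-graph hardness of computing the maximum proper monophonically convex subset (cf.\ the results cited from~\cite{kn:dourado4}) is also a plausible starting point. Given a 3-SAT formula $\phi$ with $n$ variables and $m$ clauses, I would construct an $\AT$-free graph $G_\phi$ together with a threshold $k$ such that $c_m(G_\phi) \geq k$ if and only if $\phi$ is satisfiable. The intended ingredients are: variable gadgets that encode a truth assignment via a binary choice of which ``side'' enters the convex subset; clause gadgets that force every clause to have at least one satisfied literal included; and a shared backbone, such as a (near-)universal vertex or a dense cocomparability-style core, that both ties the gadgets together and certifies $\AT$-freeness of the whole construction. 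Correctness would be argued bidirectionally: from a satisfying assignment, assemble a proper convex subset by selecting the true-literal sides together with all ``padding'' vertices of the gadgets; conversely, from a proper convex subset of size at least $k$, read off the included literal side of each variable gadget and argue from the size bound that every clause must be satisfied.

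The main obstacle, I expect, is gadget design under the $\AT$-free constraint. $\AT$-freeness forbids three vertices with pairwise long avoidance paths, and most natural SAT gadgetry relies on exactly such long separating paths. My strategy would be to confine the gadgets inside an $\AT$-free supergraph (for instance, a cocomparability or interval wrapper whose $\AT$-freeness follows directly from its structural definition) or to attach near-universal vertices that locally kill any candidate asteroidal triple without disturbing the relevant chordless paths. A secondary difficulty is fine-grained control of the monophonic closure: adjoining a single vertex to a candidate set can inflate $J$ substantially, so the gadget must be tuned so that only the intended configurations remain convex and reach the threshold $k$. If 3-SAT proves unwieldy, I would fall back on a reduction from a monophonic-convexity problem already known to be $\NP$-hard on general graphs, wrapping the hard instance in an $\AT$-free envelope whose extra vertices contribute a predictable amount to every maximum proper convex subset.
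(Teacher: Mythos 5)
Your proposal is a plan rather than a proof: the membership-in-NP part is fine, but the hardness half never materializes. You correctly identify the central obstacle---designing SAT gadgets that survive the $\AT$-free constraint---but you do not overcome it, and your fallback (wrapping a general-graph hard instance in an $\AT$-free envelope) is left equally unspecified; adding near-universal vertices to kill asteroidal triples drastically changes the monophonic closure, so it is not at all clear that the ``extra vertices contribute a predictable amount'' as you hope. As written, there is no construction to verify and no correctness argument to check.

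The idea you are missing is that you do not need to import hardness from outside the class: computing the clique number $\omega$ is already $\NP$-complete for $\AT$-free graphs (Broersma, Kloks, Kratsch and M\"uller), so you can reduce from a problem that lives natively inside the class and sidestep the gadget-design problem entirely. The paper's reduction takes an $\AT$-free graph $H$ with $\omega(H)<|V(H)|-1$, adds two nonadjacent vertices $u$ and $v$ each adjacent to all of $V(H)$, and observes that the resulting graph $G$ is still $\AT$-free. In $G$, any candidate set containing two nonadjacent vertices of $H$ pulls $u$ and $v$ into its monophonic closure (via the chordless paths through them), and once $u$ and $v$ are in, every vertex $w$ of $H$ lies on the chordless path $[u,w,v]$, so the closure is all of $V(G)$. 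Hence every proper monophonically convex subset is essentially a clique of $H$ together with at most one of $u,v$, giving $c_m(G)\geq k+1$ if and only if $\omega(H)\geq k$. This is a two-line reduction; I would abandon the 3-SAT route and adopt this one.
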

\begin{proof}
Computing the clique number $\omega$ is $\NP$-complete for 
$\AT$-free graphs~\cite{kn:broersma}. 
Let $H$ be $\AT$-free. We may assume that $\omega(H) < |V(H)|-1$. 
Let $G$ be the graph obtained from $H$ 
as follows. Add two nonadjacent vertices $u$ and $v$ and make 
each adjacent to all vertices of $H$. Notice that $G$ is 
$\AT$-free. 
As in~\cite{kn:dourado4}, it is easy to see that $c_m(G) \geq k+1$ 
if and only if $\omega(H) \geq k$. 
\qed\end{proof}
\end{remark}

\medskip 

\begin{remark}
Basically, an application of Dirac's theorem shows that  
that the monophonic number of a chordal graph 
equals the number of simplicial vertices~\cite{kn:chvatal,kn:farber}.
%
\end{remark}

\bigskip 


\newpage
\appendix 

\section{The 2-geodetic number for tree cographs}
\label{section:2_geodetic_number}

\begin{lemma}
\label{lm 2-geod tree-cograph}
There exists a linear-time algorithm to compute the 
2-geodetic number of tree-cographs. 
\end{lemma}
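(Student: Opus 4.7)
The plan is to use dynamic programming on the recursive tree-cograph decomposition tree of $G$. The decomposition can be built in linear time, and the leaves are either trees or co-trees whose 2-geodetic numbers are supplied by Lemmas~\ref{lm 2-geod tree} and~\ref{lm 2-geod comp tree}. Each internal node is either a disjoint-union or a join of its children; the task is to combine the children's values correctly and efficiently.

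At a disjoint-union node $G = G_1 \cup \cdots \cup G_t$, geodesics and open neighbourhoods stay inside single components, so $S$ is 2-geodetic in $G$ exactly when every $S \cap V(G_i)$ is 2-geodetic in $G_i$. This gives $g_2(G) = \sum_{i=1}^{t} g_2(G_i)$.

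At a join node $G = G_1 + \cdots + G_t$ the analysis rests on two observations: $G$ has diameter at most $2$, so the 2-geodetic and geodetic conditions coincide on $G$; and every non-adjacent pair of $G$ lies inside a single $V(G_i)$, since all cross-component edges are present. Let $NC = \{\, i : G_i \text{ is not a clique}\,\}$. If $NC = \es$, then $G$ itself is a clique and $g_2(G) = |V(G)|$. If $NC = \{i\}$, every 2-geodetic set $S$ of $G$ must have $S \cap V(G_i)$ 2-geodetic in $G_i$ (non-adjacent pairs can live only in $V(G_i)$), and conversely a minimum 2-geodetic set of $G_i$ is already 2-geodetic in $G$ (the necessary non-adjacent pair it contains covers all vertices outside $V(G_i)$), so $g_2(G) = g_2(G_i)$. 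If $|NC| \geq 2$, two non-adjacent pairs chosen from two distinct non-clique children form a 2-geodetic set, so $g_2(G) \leq 4$; the exact value in $\{2, 3, 4\}$ is decided by a bounded family of structural tests (for instance, $g_2(G) = 2$ holds exactly when some $\overline{G_i}$ has a component isomorphic to $K_2$).

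With suitable constant-size information cached at each child (whether it is a clique, its $g_2$-value, and witnesses for the small-size join tests), each decomposition-tree node is processed in time proportional to the number of its children plus, at the leaves, the size of the subgraph. Summed over the whole decomposition the running time is linear. The main obstacle is the join case with $|NC| \geq 2$: a careful enumeration is needed to list all small-size 2-geodetic configurations, and to prove that each such configuration is detected by one of the constant-size tests propagated by the dynamic program, so that the per-node work stays proportional to the number of children.
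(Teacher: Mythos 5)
Your proposal follows essentially the same route as the paper: recurse on the tree-cograph decomposition, use Lemmas~\ref{lm 2-geod tree} and~\ref{lm 2-geod comp tree} at the leaves, sum over components at union nodes, and cap the answer at $4$ at join nodes. The one place you leave open --- determining the exact value in $\{2,3,4\}$ at a join node with at least two non-clique co-components via an unspecified ``bounded family of structural tests'' --- is exactly where the paper does its (short) work, and it is worth seeing that no enumeration of configurations is needed. Writing $C_1,\dots,C_k$ for the co-components with $|C_i|\ge 2$ (equivalently, the non-clique join factors), the paper proves $g_2(G)=\min\{4,\,g_2(G[C_i]) : 1\le i\le k\}$ outright: since every nonadjacent pair of $G$ lies inside a single $C_i$, a $2$-geodetic set of size $2$ is a nonadjacent pair in some $C_i$ and restricts to a $2$-geodetic set of $G[C_i]$; and a minimum $2$-geodetic set of size $3$ cannot straddle two co-components, because its unique nonadjacent pair would then already be a $2$-geodetic set of $G$ (every vertex outside that pair, including the third element, is adjacent to both members or lies in the same $C_i$ and is covered by them), contradicting minimality --- so it too lies in one $C_i$ and restricts to a $2$-geodetic set there. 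Conversely any $2$-geodetic set of a non-clique $G[C_i]$ contains a nonadjacent pair and hence is $2$-geodetic in all of $G$. Thus your ``structural tests'' collapse to comparing $4$ with the recursively computed values $g_2(G[C_i])$, your proposed test for the value $2$ (a $K_2$ component of $\Bar{G}$) being the special case $g_2(G[C_i])=2$. With this the per-node work is proportional to the number of children, and your linear-time accounting goes through; modulo filling in this step, your argument matches the paper's.
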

\begin{proof}
Let $G$ be a tree-cograph. 
By Lemmas~\ref{lm 2-geod tree} and~\ref{lm 2-geod comp tree} 
we may assume that $G$ is not a tree or the 
complement of a tree. 

\medskip 

\noindent 
Assume that $G$ is disconnected and let $G$ be a union of two 
tree-cographs $G_1$ and $G_2$. 
In that case 
\[g_2(G)=g_2(G_1)+g_2(G_2).\] 

\medskip 

\noindent
Now assume that $\Bar{G}$ is disconnected. 
Let $C_1,\dots,C_t$ be the components of $\Bar{G}$ and let them 
be ordered such that 
\[|C_i| \geq 2 \quad \text{if and only if}\quad 1 \leq i \leq k.\] 
If $k=0$ then $G$ is a clique and $g_2(G)=n$. 

\medskip 

\noindent 
Assume that $k=1$. Then 
\[g_2(G)=g_2(C_1),\]  
where we write $g_2(C_1)$ instead of $g_2(G[C_1])$, for convenience 
and clarity of notation. 
To see that, let $D$ be a $2$-geodetic set in $G$. Then $D \cap C_1$ 
is a $2$-geodetic set in $G[C_1]$. Conversely, let $D^{\prime}$ be a 
$2$-geodetic set in $G[C_1]$. Then $D^{\prime}$ contains two nonadjacent 
vertices since $G[C_1]$ is not a clique. Then $D^{\prime}$ is also a 
$2$-geodetic set in $G$. 

\medskip 

\noindent 
Assume that $k \geq 2$. 
We claim that 
\[g_2(G)=\min \; \{\; 4,\; g_2(C_i)\;|\; 1 \leq i \leq k\;\}.\] 
Notice that a set of 4 vertices, 2 nonadjacent in $C_1$ and 2 
nonadjacent in $C_2$, form a $2$-geodetic set in $G$. 
Thus $g_2(G) \leq 4$. 

\medskip 

\noindent
Assume that $g_2(G)=2$. Then a minimum $2$-geodetic set consists 
of two nonadjacent vertices, which must be contained in some $C_i$. 
Thus in that case, 
\begin{equation}
\label{eqntreeco2}
g_2(G)=\min \; \{\; g_2(C_i) \;|\; i \in \{1,\dots,k\}\;\}.
\end{equation}

\medskip 

\noindent
Assume that $g_2(G)=3$. Assume that two vertices of a minimum 
geodetic set are in $C_1$ and one is in $C_2$. Then the two 
in $C_1$ must be a $2$-geodetic set, contradicting the minimality. 
Therefore,~\eqref{eqntreeco2} holds also in this case. 

\medskip 

\noindent
This proves the lemma. 
\qed\end{proof}

\section{The geodetic number for $P_4$-sparse graphs}
\label{section P4sparse}

Ho\`ang introduced $P_4$-sparse graphs 
as follows.    

\begin{definition}
A graph is $P_4$-sparse if every set of 5 vertices induces at most one $P_4$. 
\end{definition}

Jamison and Olariu characterized $P_4$-sparse graphs 
using the notion of spiders. 

\begin{definition}
A graph $G$ is a \underline{thin spider} if 
its vertices are partitioned into 3 sets, 
$S$, $K$ and $R$, such that the following conditions hold. 
\begin{enumerate}[\rm 1.]
\item $S$ is an independent set and $K$ is a clique and 
\[|S|=|K| \geq 2.\] 
\item Every vertex of $R$ is adjacent to every vertex 
of $K$ and to no vertex of $S$. 
\item There is a bijection between $K$ and $S$ such that 
every vertex of $S$ 
is adjacent to a unique vertex in $K$. 
\end{enumerate}
A \underline{thick spider} is the complement of a thin spider. 
\end{definition}  
Notice that, possibly $R=\es$. 
The set $R$ is called the head, $K$ the body and 
$S$ the set of feet of the spider. For a thick spider we 
switch the notation 
$K$ and $S$ for the feet and the body when taking the 
complement, so that in both cases the head 
$R$ is adjacent to the body $K$.   

\bigskip 

The following characterization of $P_4$-sparse graphs. 

\begin{theorem}
A graph $G$ is $P_4$-sparse if and only if for every induced subgraph $H$ of 
$G$ one of the following conditions is satisfied. 
\begin{enumerate}[\rm (a)]
\item $H$ is disconnected. 
\item $\Bar{H}$ is disconnected. 
\item $H$ is isomorphic to a spider. 
\end{enumerate}
\end{theorem}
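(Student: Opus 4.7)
The plan is to prove the two directions separately, exploiting the fact that the $P_4$ is self-complementary, so $P_4$-sparseness is preserved under complementation, and is also hereditary by definition. The real work is the forward direction; the reverse direction is a straightforward compositional check.

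For the direction $(\Leftarrow)$, I would induct on $|V(H)|$. The base cases $|V(H)| \leq 4$ are immediate. For the inductive step, note that $P_4$-sparseness is preserved under disjoint union and under join, because any induced $P_4$ is connected, has a non-edge, and has a universal non-neighbor vertex for one endpoint, so a $P_4$ in $G_1 \cup G_2$ or $G_1 + G_2$ must live entirely in one factor; thus the number of $P_4$s on any 5 vertices is the maximum over the two sides, and is $\leq 1$ by hypothesis. It remains to check directly that a thin spider is $P_4$-sparse: fix parts $(S,K,R)$ with the prescribed matching, pick any five vertices, and observe that any induced $P_4$ must contain at least one matched pair $(s,k)$ with $s \in S$ and $k \in K$ (since $S$ is independent, $K \cup R$ induces a split-like structure, and the only edges between $S$ and $K \cup R$ are the matching edges). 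A short case analysis on how many matched pairs are included shows that at most one $P_4$ can be induced. Complementation then covers thick spiders.

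For the direction $(\Rightarrow)$, suppose $H$ is $P_4$-sparse and satisfies none of (a)--(b); I must show $H$ is a spider. Since $H$ is connected and co-connected with $|V(H)| \geq 2$, $H$ contains an induced $P_4$ (otherwise $H$ is a cograph, but a connected co-connected cograph has at most one vertex). Fix such a $P_4$, say $a\text{-}b\text{-}c\text{-}d$. For every other vertex $v$, inspect the $\binom{4}{1}\cdot 2^4 = 16$ possible adjacency patterns of $v$ on $\{a,b,c,d\}$ and discard any pattern that causes $\{a,b,c,d,v\}$ to induce a second $P_4$. By self-complementarity of $P_4$, I may also apply this to $\overline{H}$. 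This case analysis reduces the admissible patterns to a very small list (essentially: $v$ is adjacent to exactly $\{b,c\}$, or to exactly one of $b,c$, or to none or all of $\{a,b,c,d\}$). I then partition $V(H)$ according to the admissible types and verify, using connectedness, co-connectedness, and the $P_4$-sparse condition applied to triples of vertices outside $\{a,b,c,d\}$, that the partition has the form of a spider body $K$, feet $S$ with a perfect matching to $K$, and head $R$ joined to $K$ and antijoined to $S$ (or its complement, yielding a thick spider).

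The main obstacle is Step $(\Rightarrow)$, specifically showing that the admissible vertex-types coalesce into a clean spider partition rather than a locally consistent but globally non-spider structure. The crux is to prove that the matching between $S$ and $K$ is a bijection. For this I expect to argue as follows: if $s_1, s_2 \in S$ share the same partner in $K$, or if some $k \in K$ has no partner in $S$, then together with two appropriately chosen vertices on the backbone $P_4$ I can exhibit either a second induced $P_4$ on five vertices (violating $P_4$-sparseness) or contradict connectedness of $H$ or $\overline{H}$. Once the bijection is established, the remaining spider axioms follow by direct check against the admissible attachment list, completing the induction.
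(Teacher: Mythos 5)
The paper offers no proof of this theorem at all: it is quoted as the known Jamison--Olariu characterization of $P_4$-sparse graphs, so there is no in-paper argument to compare yours against. Judged on its own, your backward direction is essentially sound (closure of $P_4$-sparseness under disjoint union and join, plus a finite check for spiders over a $P_4$-sparse head, with the head handled by the induction hypothesis), although your claim that ``any induced $P_4$ must contain at least one matched pair'' overlooks the $P_4$'s that live entirely inside the head $R$ and must be handled by induction rather than by the matching.

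The forward direction, however, is where the whole content of the theorem lies, and there it has two concrete problems. First, your list of admissible adjacency types of an outside vertex $v$ to the backbone $a\hbox{-}b\hbox{-}c\hbox{-}d$ is wrong: if $v$ is adjacent to exactly one of $b,c$, say $N(v)\cap\{a,b,c,d\}=\{b\}$, then $\{v,b,c,d\}$ induces the path $v\hbox{-}b\hbox{-}c\hbox{-}d$, giving a second $P_4$ on the five vertices $\{a,b,c,d,v\}$; the only admissible types are $\varnothing$, $\{b,c\}$, and all of $\{a,b,c,d\}$. Second, the step you yourself flag as ``the main obstacle'' --- showing that these types coalesce globally into a spider partition $(S,K,R)$ with $K$ a clique, $S$ independent, a perfect matching between them, and $R$ joined to $K$ and anti-joined to $S$ --- is only announced (``I expect to argue as follows'', ``the remaining spider axioms follow by direct check''), not carried out. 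Since the classification is relative to one chosen $P_4$ and different backbones classify vertices differently, establishing each spider axiom genuinely requires further five-vertex arguments (and the bijection argument must also rule out structures such as two feet sharing a body vertex without breaking connectivity); none of this is executed. As it stands the proposal is a correct high-level plan for the standard proof, with one checkable error and the decisive step missing.
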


\bigskip 

\begin{lemma}
Let $G$ be $P_4$-sparse. There exists a linear-time algorithm to compute 
the $2$-geodetic number $g_2(G)$. 
\end{lemma}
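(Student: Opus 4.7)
The plan is to apply the structural theorem: every $P_4$-sparse graph is either disconnected, has disconnected complement, or is a (thin or thick) spider, and each non-trivial piece of the decomposition (the components on either side, or the head $R$ of a spider) is again $P_4$-sparse. This gives a dynamic-programming algorithm on the $P_4$-sparse decomposition tree of $G$, which is known to be constructible in linear time; I would process it bottom-up and apply at each internal node a composition rule for $g_2$ matching the type of that node.

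For the parallel (disconnected) and series (co-disconnected) nodes I would reuse the composition rules already derived in Lemma~\ref{lm 2-geod tree-cograph}: $g_2$ of a disjoint union equals the sum of the $g_2$'s of the components, and for a co-disconnected graph $g_2$ equals $n$, $g_2(C_1)$, or $\min\{4,\min_i g_2(C_i)\}$ according to whether zero, one, or at least two of the components have more than one vertex. These formulas only use the parallel/series structure, so they transfer to $P_4$-sparse graphs verbatim.

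The new ingredient is the spider node. In both thin and thick spiders the feet $S$ are simplicial and hence, by Lemma~\ref{lm simplicial}, must sit inside every $2$-geodetic set. The key observation for a vertex $v\in R$ outside the set is that $K\subseteq N(v)$, $K$ is a clique, and every body vertex is adjacent to every head vertex, so that any two non-adjacent neighbors of $v$ in the set must both lie in $R$; thus the restriction of the set to $R$ must itself be a $2$-geodetic set of $G[R]$. A short direct check then shows that, once $S$ together with any $2$-geodetic set $D_R$ of $G[R]$ are included, every body vertex $k$ is automatically $2$-covered by some foot adjacent to $k$ together with any vertex of $D_R$ (non-adjacent because $R$ has no edges to $S$). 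This gives the clean rule $g_2(G)=|S|+g_2(G[R])$ when $R\neq\es$. The case $R=\es$ is handled by a direct argument: for a thin spider $g_2(G)=|S|+1$; for a thick spider with $|S|\geq 3$ the independent feet already $2$-cover the body through the antimatching, so $g_2(G)=|S|$; and for $|S|=2$ the spider reduces to a $P_4$ with $g_2(G)=3$.

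The main obstacle will be executing the spider analysis cleanly, in particular observing that a vertex of $R$ is simplicial in $G$ if and only if it is simplicial in $G[R]$ (so the recursion is sound), checking that every $2$-geodetic set returned on a non-empty $R$ has at least one vertex (so the body is covered for free), and isolating the boundary cases at $R=\es$. Once the spider rule is in place, the overall algorithm is a single bottom-up pass over the linear-size decomposition tree and runs in linear time.
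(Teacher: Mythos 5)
Your proposal is correct and follows essentially the same route as the paper: the same disconnected/co-disconnected/spider decomposition, the same reuse of the tree-cograph composition rules for parallel and series nodes, and the same spider formulas ($g_2(G)=|S|+g_2(G[R])$ for $R\neq\es$, with the $|S|+1$, $|S|$, and $P_4$ boundary cases for $R=\es$), justified by the same observations that the feet are simplicial and that a head vertex's two nonadjacent witnesses must both lie in $R$. No gaps.
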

\begin{proof}
The proof for the cases where 
$G$ or $\Bar{G}$ is disconnected is 
similar to the proof of Lemma~\ref{lm 2-geod tree-cograph}. 

\medskip 

\noindent 
Assume that $G$ is a thin spider, say with a head $R$, a body $K$ and 
a set of feet $S$. 
Since all feet are pendant vertices, they all have to be 
in any minimum $2$-geodetic set. So, 
\[R=\es \quad\Rightarrow \quad g_2(G)=|S|+1,\] 
since, choosing $S$ and one vertex in $K$ gives every 
other vertex of $K$ two nonadjacent neighbors. 
When $R \neq \es$, we have 
\[g_2(G)=|S|+g_2(R),\] 
where we write $g_2(R)$ instead of $g_2(G[R])$ for convenience. 
To see that, let $D$ be a minimum $2$-geodetic set in $G[R]$. 
Then $D \cup S$ is a $2$-geodetic set in $G$, since every vertex of $K$ 
is adjacent to one foot and one element of $D$. 
Now let $D^{\prime}$ be a minimum $2$-geodetic set in $G$. 
Then $D^{\prime} \cap R$ is a $2$-geodetic set in $G[R]$, 
since for a vertex in $R$ the two nonadjacent neighbors in $D^{\prime}$ 
must be in $D^{\prime} \cap R$. 

\medskip 

\noindent 
Assume that $G$ is a thick spider. Let the head be represented by $R$.  
Let the body, which is a clique joined to $R$, be represented by $K$ 
and let the set of feet be represented by $S$.  
Notice that, 
\[R=\es \quad\Rightarrow\quad 
g_2(G)=
\begin{cases}
3 & \text{if $|K|=|S|=2$} \\
|S| & \text{otherwise.}
\end{cases}
\] 
To see that, when $|K|=|S|=2$ the graph is a $P_4$, and $g_2(P_4)=3$. 
When $|S| > 2$, all vertices of $S$ must be in a minimum 
$2$-geodetic set, since they are simplicials. Since $|S| > 2$, 
every vertex of $K$ is adjacent to 2 nonadjacent feet.  

\medskip 

\noindent 
Finally assume that $G$ is a thick spider and assume that $R \neq \es$. 
In that case,      
\[g_2(G)=|S|+g_2(R).\] 
The argument is similar to the one we gave above. 

\medskip 

\noindent
This proves the lemma. 
\qed\end{proof}

\bigskip 
 
\begin{theorem}
There exists a linear-time algorithm to compute the 
geodetic number of a $P_4$-sparse graph. 
\end{theorem}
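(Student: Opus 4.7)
The plan is to follow the recursive decomposition of $P_4$-sparse graphs given by the Jamison--Olariu theorem (disconnected, co-disconnected, or spider) and to derive, in each case, a formula for $g(G)$ that can be evaluated in linear time using the $g_2$-subroutine from the previous lemma. If $G$ is disconnected with components $C_1,\dots,C_t$, then $g(G)=\sum_i g(G[C_i])$, since no geodesic crosses components. If $\Bar{G}$ is disconnected, the analysis mirrors the tree-cograph theorem: with $k$ denoting the number of non-trivial components of $\Bar{G}$, one has $g(G)=n$ when $k=0$, $g(G)=g_2(G[C_1])$ when $k=1$, and $g(G)=\min\{4,\min_i g_2(G[C_i])\}$ when $k\ge 2$. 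The key reason is that in a join any two non-adjacent vertices are at distance exactly two, so covering an interior vertex reduces to a 2-geodetic condition inside a single component.

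The new ingredient is the spider case. Assume that $G$ is a thin or thick spider with body $K$, feet $S$, and head $R$ (possibly empty). Every foot is simplicial in $G$, so Lemma~\ref{lm simplicial} forces $S$ to be contained in every geodetic set. I will show that
\[
g(G)\;=\;|S|\,+\,g_2(G[R]),
\]
with the convention $g_2(G[\es])=0$. For the upper bound, a case analysis shows that $S$ alone already covers $K$: in a thin spider every $k_i\in K$ lies on the length-3 geodesic $[s_i,k_i,k_j,s_j]$ for any $j\neq i$; in a thick spider with $|S|\ge 3$ every $k_i$ lies on the length-2 geodesic $[s_j,k_i,s_l]$ for any $j,l\in\{1,\dots,|S|\}\setminus\{i\}$ with $j\neq l$; and the degenerate thick-spider case $|K|=|S|=2$ is isomorphic to $P_4$, for which $g=|S|=2$ holds directly. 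Adjoining any 2-geodetic set $D$ of $G[R]$ to $S$ then covers each $r\in R\setminus D$ via the length-2 geodesic through its two non-adjacent neighbors in $D$.

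The harder direction, and the main obstacle, is the matching lower bound. The key technical claim is that in a spider the only geodesics of $G$ that can have an interior vertex in $R$ are length-2 paths $[r_1,r,r_2]$ whose endpoints both lie in $R$ and are non-adjacent in $G[R]$. I will verify this by enumerating all pairwise distances in a spider: in a thin spider the only distance-3 pair is a pair of feet, and its unique geodesic is contained in $S\cup K$; in a thick spider distance-3 pairs occur only in the $|K|=|S|=2$ sub-case and again lie outside $R$; every other vertex pair has distance at most~$2$; and for a length-2 geodesic $[x,r,y]$ with $r\in R$, the combined facts that $r$ is non-adjacent to $S$, that $K$ induces a clique (thin case), and that $R$ is completely joined to $K$ (both cases) eliminate all candidates for $(x,y)$ except pairs in $R$ that are non-adjacent in $G[R]$. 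Consequently, for any geodetic set $W$ of $G$ the intersection $W\cap R$ is a 2-geodetic set of $G[R]$, while vertices of $W$ inside $K$ contribute nothing towards covering $R$, so $|W|\ge |S|+g_2(G[R])$. Combining the three decomposition cases with the linear-time modular decomposition of $P_4$-sparse graphs and the linear-time $g_2$-algorithm of the previous lemma yields a linear-time algorithm for $g(G)$.
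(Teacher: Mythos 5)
Your proposal is correct and follows essentially the same route as the paper: the Jamison--Olariu decomposition into disconnected, co-disconnected, and spider cases, with the same formulas ($\sum_i g(C_i)$, the $n/g_2(C_1)/\min\{4,\dots\}$ trichotomy for joins, and $g(G)=|S|+g_2(G[R])$ for spiders), all reduced to the linear-time $g_2$-subroutine. Your enumeration of distances in a spider to justify the lower bound $|W|\ge|S|+g_2(G[R])$ is more explicit than the paper's brief remark that the relevant geodesics have length two, but it is the same argument.
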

\begin{proof}
Let $G$ be a $P_4$-sparse graph. 
First assume that $G$ is disconnected. 
Let $C_1,\dots,C_t$ be the components of $G$. 
then 
\[g(G)=\sum_{i=1}^t \; g(G[C_i]).\] 

\medskip 

\noindent 
Assume that $\Bar{G}$ is disconnected. 
Let $C_1,\dots, C_t$ be the components of $\Bar{G}$. 
Assume that the components are ordered such that  
\[|C_i| \geq 2 \quad\text{if and only if} \quad 1 \leq i \leq k.\] 
Then we claim that  
\begin{equation}
\label{eqnsparse3}
g(G)=
\begin{cases}
n & \text{if $k=0$} \\
g_2(G[C_1]) & \text{if $k=1$} \\
\min \; \{\; 4, \; g_2(G[C_i])\;|\; 1 \leq i \leq k\;\} 
& \text{if $k \geq 2$.}
\end{cases}
\end{equation}
To prove the claim, notice that, when $k=0$, $G$ is a clique and 
then $g(G)=n$. Assume that $k=1$. Let $D$ be a minimum 
$2$-geodetic set in $G[C_1]$. 
Then this contains two nonadjacent vertices of $C_1$. Then 
$D$ is a geodetic set in $G$. For the converse, let 
$D^{\prime}$ be a minimum geodetic set in $G$. Then it contains 
two nonadjacent vertices, which must be in $C_1$. Then 
$D^{\prime} \cap C_1$ is a $2$-geodetic set in $G[C_1]$, 
since any geodesic has length two. Assume that 
$k \geq 2$. Notice that 4 vertices, of which two are nonadjacent 
elements of $C_2$ and two are nonadjacent elements of $C_2$, form a 
geodetic set. Therefore, $g(G) \leq 4$. Assume that 
$g(G)=2$. Then a minimum geodetic set contains two nonadjacent 
vertices which must be in some component, say $C_1$. Then those two 
vertices form a $2$-geodetic set in $G[C_1]$, and Equation~\eqref{eqnsparse3}  
holds true. Assume that $g(G)=3$. It cannot be that two are in $C_1$ 
and one is in $C_2$; that is, all three must be in one component of 
$\Bar{G}$. In that case, the three must form a $2$-geodetic set in that 
component. This proves the claim.     

\medskip 

\noindent
Assume that $G$ is a thin spider and let $R$, $K$ and $S$ be its head, body 
set of feet. 
By Lemma~\ref{lm simplicial} all the feet are in any geodetic set. 
When $R=\es$, then $S$ is a geodetic set, since all vertices 
of $K$ are in geodesics with endvertices in $S$. So we have 
\[R=\es \quad \Rightarrow \quad g(G)=|S|.\] 

\medskip 

\noindent
Assume that $R \neq \es$. 
Then 
\[g(G)=|S|+g_2(R),\] 
where we write $R$ instead of $G[R]$ for convenience.  
To see that, let $D$ be a minimum $2$-geodetic set in $G[R]$. 
Then $D \cup S$ is a geodetic set in $G$, since every vertex 
in $K$ has two nonadjacent neighbors in the set, one in $S$ and one in $R$. 
Let $D^{\prime}$ be a minimum geodetic set in $G$. 
Then $S \subseteq D^{\prime}$. 
Furthermore, $D^{\prime} \cap R$ is a $2$-geodetic set in $G[R]$, since 
any geodesic has length 2 unless $R$ is a clique. 

\medskip 

\noindent 
Assume that $G[R]$ is a thick spider. 
Then 
\[g(G)=|S|+g_2(R).\] 
The analysis is similar to the treatment of the thin spiders. 

\medskip 
  
\noindent
This proves the theorem. 
\qed\end{proof}

\section{The monophonic number in permutation graphs}
\label{section:monophonic_permutation}

A permutation diagram consists of two horizontal lines $L_1$ and $L_2$, 
one above the other. Each of the two lines has $n$ distinct, designated points on 
it, labeled in an arbitrary order, $1,2,\dots,n$. 
Each point on the topline is connected, {\em via\/} 
a straight 
linesegment, 
to the point on the bottom line that has the same label. 

\bigskip 

The companion of a permutation diagram is a permutation graph. 
The $n$ vertices of the graph are the $n$ linesegments in the diagram that 
connect the labeled points on the topline and bottom line. Any two vertices 
in the graph are adjacent precisely when the two linesegments intersect 
each other. 
In general, a graph is a permutation graph if it is represented by a permutation 
diagram. 

\bigskip 

Notice that, if a graph $G$ is a permutation graph then so is its complement $\Bar{G}$. 
By the transitivity of the left-to-right ordering 
of parallel linesegments, it follows that 
permutation graphs are comparability graphs. Baker 
et al. proved that these two properties characterize permutation graphs: 
a graph $G$ is a permutation graph if and only if 
$G$ and $\Bar{G}$ are comparability graphs
(see, e.g., also~\cite{kn:kloks}).

\begin{lemma}
Permutation graphs are $\AT$-free. 
\end{lemma}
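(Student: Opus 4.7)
The plan is to leverage the characterization stated just before the lemma: $G$ is a permutation graph if and only if both $G$ and $\bar{G}$ are comparability graphs. Hence every permutation graph is a cocomparability graph, and the paper already observes at the start of Section~\ref{section:at_free} that cocomparability graphs are $\AT$-free; this already gives a one-line proof.

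For a self-contained argument I would instead reason directly from the permutation diagram. Suppose for contradiction that $\{x,y,z\}$ is an asteroidal triple. Each of $x,y,z$ must lie outside the closed neighborhood of each of the others (otherwise a required path could not even begin or end), so $\{x,y,z\}$ is independent and the corresponding segments pairwise do not cross. On the top line one of the three endpoints is in the middle; relabel so that $t_x<t_y<t_z$. Two non-crossing segments with $t_a<t_b$ must satisfy $b_a<b_b$, so the bottom endpoints also order as $b_x<b_y<b_z$. Consequently, $x$'s segment lies entirely to the \emph{left} of $y$'s and $z$'s segment lies entirely to the \emph{right} of $y$'s, in the sense that both endpoints of $x$ precede, and both endpoints of $z$ follow, the corresponding endpoints of $y$.

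Now take any path $P$ from $x$ to $z$ in $G$ that avoids $N[y]$. Every vertex of $P$ corresponds to a segment that does not cross $y$'s, and is therefore either entirely left or entirely right of $y$ in the sense above. Walking along $P$ from $x$ (left) to $z$ (right), some consecutive pair $u,v$ switches sides; say $u$ is left of $y$ and $v$ is right of $y$. Then $t_u<t_y<t_v$ and $b_u<b_y<b_v$, which forces the segments of $u$ and $v$ not to cross --- contradicting the fact that $u$ and $v$ are adjacent in $G$. The only mildly delicate point is recognising that any asteroidal triple must be independent and translating this into the geometric statement about non-crossing segments; from there the diagram does all the work, and I expect no real obstacle beyond this bookkeeping.
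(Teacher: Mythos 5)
Your self-contained argument is essentially the paper's own proof: three pairwise nonadjacent vertices give three parallel segments, and any path joining the two outer ones must contain a segment meeting the middle one's closed neighborhood --- you merely spell out the "switching sides" detail that the paper leaves implicit. The proposal is correct (and the cocomparability one-liner is also fine, though it rests on a fact the paper states without proof).
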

\begin{proof}
Three, pairwise nonadjacent vertices in a permutation graph $G$,   
are represented by three parallel linesegments in its diagram. 
Some linesegment of a path that connects the outer two must 
intersect (or be equal to) the linesegment that is in the middle 
of the three.  
This implies that the path 
hits the closed neighborhood of the vertex in the middle. 
\qed\end{proof}

\bigskip 

An elegant notion of \underline{betweenness} for $\AT$-free graphs was introduced 
by Broersma et al. as follows. 

\begin{center}
\begin{boxedminipage}{12cm}
For two nonadjacent vertices $x$ and $y$ in a graph $G$ denote by $C_x(y)$ 
the component of $G-N[x]$ that contains $y$. 
\end{boxedminipage}
\end{center}

\begin{definition}
Let $x$ and $y$ be nonadjacent vertices. A vertex $z$ is 
\underline{between}  
$x$ and $y$ if 
\[z \in C_x(y) \cap C_y(x).\] 
\end{definition}

The following justification of this definition was proved in~\cite{kn:broersma}. 

\begin{theorem}
\label{thm between}
Let $G$ be $\AT$-free and let $z$ be a vertex between 
nonadjacent vertices $x$ and $y$. 
Then $x$ and $y$ are in different components of $G-N[z]$. 
\end{theorem}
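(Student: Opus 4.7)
The plan is to proceed by contradiction, exhibiting an asteroidal triple $\{x,y,z\}$ in $G$ from the assumption that $x$ and $y$ share a component of $G-N[z]$. The unpacking of the ``between'' hypothesis gives two of the three required asteroidal paths essentially for free, so the only thing to produce from the contradiction hypothesis is the third.

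Concretely, I would first record that $z$ is nonadjacent to $x$ and to $y$: since $z\in C_x(y)$, we have $z\notin N[x]$, and since $z\in C_y(x)$, we have $z\notin N[y]$, so the three vertices $x,y,z$ are pairwise nonadjacent. Next, from $z\in C_x(y)$, the definition of $C_x(y)$ as a component of $G-N[x]$ gives a path from $z$ to $y$ inside $G-N[x]$, i.e.\ a $y,z$-path avoiding $N[x]$. Symmetrically, $z\in C_y(x)$ yields an $x,z$-path avoiding $N[y]$.

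For the third path, I would suppose toward contradiction that $x$ and $y$ lie in the same component of $G-N[z]$; then there is an $x,y$-path avoiding $N[z]$. Together with the two paths above, this witnesses that $\{x,y,z\}$ is an asteroidal triple, contradicting that $G$ is $\AT$-free. Hence $x$ and $y$ must be separated by $N[z]$, which is exactly the conclusion.

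There is no real obstacle in this argument: everything is bookkeeping of the definitions of $C_x(y)$, $C_y(x)$, and the asteroidal triple. The only mild subtlety is verifying pairwise nonadjacency of $x,y,z$ before invoking the $\AT$-free hypothesis, which is why I would make that the first step rather than leaving it to the reader.
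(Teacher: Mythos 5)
Your proof is correct: the two containments $z\in C_x(y)$ and $z\in C_y(x)$ give the $y,z$-path avoiding $N[x]$ and the $x,z$-path avoiding $N[y]$, pairwise nonadjacency is properly checked, and the contradiction hypothesis supplies the third path of an asteroidal triple. The paper does not reproduce a proof (it only cites Broersma et al.), but your argument is exactly the standard one for this fact, so nothing further is needed.
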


\bigskip 
 
\begin{definition}
A pair of vertices $x$ and $y$ is extremal if the 
number of vertices between them is maximal. 
\end{definition}

\begin{lemma}
\label{lm extremal pair}
Let $G$ be $\AT$-free and let $\{x,y\}$ be an extremal pair. 
Let 
\[\Delta(x)=N(C_x(y)) \quad\text{and}\quad  
A(x)=V \setminus (A(x) \cup \Delta(x).\] 
Then every vertex of $A(x)$ is adjacent to every vertex of $\Delta(x)$. 
\end{lemma}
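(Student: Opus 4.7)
The plan is a contradiction argument driven by the extremality of $\{x,y\}$. Suppose there exist $a \in A(x)$ and $d \in \Delta(x)$ with $ad \notin E$. Since $\Delta(x) \subseteq N(x)$, we have $x \sim d$, so $a \neq x$; the two remaining possibilities are $a \in N(x) \setminus \Delta(x)$ or $a$ lies in a component of $G - N[x]$ other than $C_x(y)$. In either subcase a direct check gives $N[a] \cap C_x(y) = \es$, which incidentally implies $a \not\sim y$, so that $\{a,y\}$ is a nonadjacent pair whose between-set can meaningfully be compared with that of $\{x,y\}$. Fix once and for all a vertex $c \in C_x(y)$ adjacent to $d$, which exists by the definition $\Delta(x) = N(C_x(y))$.

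I aim to prove two things: (i) every vertex $w$ between $x$ and $y$ is also between $a$ and $y$; and (ii) the vertex $d$ is between $a$ and $y$ but is not between $x$ and $y$. Together these statements force $\{a,y\}$ to have strictly more between-vertices than $\{x,y\}$, contradicting extremality.

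Statement (i) splits into two sub-inclusions. For $w \in C_a(y)$, I use the $w$-to-$y$ path inside $C_x(y)$, which avoids $N[a]$ by the observation above. For $w \in C_y(a)$, I concatenate a path from $w$ to $x$ in $G - N[y]$ (existing because $w \in C_y(x)$) with a short bridge from $x$ to $a$ avoiding $N[y]$: either the edge $xa$ in the first subcase, or a short path through $N(x)$ into the component of $a$ in the second. For statement (ii), the fact that $d \in N(x)$ already excludes $d$ from $C_x(y)$; the path $d, c, \ldots, y$ inside $\{d\} \cup C_x(y)$ avoids $N[a]$, establishing $d \in C_a(y)$, and a symmetric bridge through $x$ gives a $d$-to-$a$ path avoiding $N[y]$, establishing $d \in C_y(a)$.

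The delicate point, and the place where I expect to spend most of the effort, is ensuring the bridging routes from $x$ to $a$ (and the vertex $d$ itself) can be chosen outside $N[y]$: if $d \sim y$, or if every vertex of $N(x)$ neighboring the component of $a$ is adjacent to $y$, the naive routing fails. This is precisely where $\AT$-freeness must be used substantively. In each such situation I plan to assemble three pairwise-avoiding paths---one through $C_x(y)$ from $c$ (or $y$) to the obstructing vertex, one through $N(x)$ from $x$ to $a$, and one through $\{d\}\cup C_x(y)$---to exhibit an asteroidal triple (a candidate being $\{a, y, d\}$, with the path from $a$ to $y$ going through the component of $a$ and across $N(x)\setminus N[d]$, the path from $a$ to $d$ routed outside $N[y]$, and the path from $d$ to $y$ running through $C_x(y)$), contradicting $\AT$-freeness and closing the argument.
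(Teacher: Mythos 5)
Your overall plan---derive a contradiction with extremality by exhibiting a nonadjacent pair $\{a,y\}$ whose between-set strictly contains that of $\{x,y\}$---is the natural reading of the stated hypothesis, and parts of it are sound: $N[a]\cap C_x(y)=\es$ does hold for every $a\in A(x)$, so $C_x(y)\cup\{d\}$ lies in a single component of $G-N[a]$, and your argument that $d\in C_a(y)$ is correct. The fatal gap is the other half of your statement (ii), namely $d\in C_y(a)$. If $d\sim y$ then $d\in N[y]$, so $d$ does not survive into $G-N[y]$ and cannot be between $a$ and $y$ no matter how the paths are routed; for the same reason your proposed asteroidal triple $\{a,y,d\}$ can never be asteroidal, since no $a,d$-path avoids $N[y]$ when $d$ itself lies in $N[y]$. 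So AT-freeness cannot rescue this case. Worse, the gap is not repairable: in the path $P_4=v_1v_2v_3v_4$ every nonadjacent pair has an empty between-set, so $\{v_2,v_4\}$ is extremal; taking $x=v_2$, $y=v_4$ gives $C_x(y)=\{v_4\}$, $\Delta(x)=\{v_3\}$, $A(x)=\{v_1,v_2\}$, and $v_1\not\sim v_3$. Thus the lemma, read literally with the paper's definition of extremal pair (maximizing the number of between-vertices), fails, and the obstruction you flagged as ``the delicate point'' is a symptom of that rather than a missing trick.

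For comparison, the paper's own proof quietly works under a different hypothesis: it takes $C$ to be a component of $G-N[x]$ of maximum cardinality (implicitly, maximum over all choices of the removed vertex), observes that a nonadjacent pair $a\in A$, $d\in\Delta$ would place $C\cup\{d\}$ inside one component of $G-N[a]$, and concludes that $C$ was not maximum. Under that reading the argument is one line and, notably, uses no AT-freeness at all. A correct write-up should either adopt that stronger maximality hypothesis explicitly, or prove separately that the pairs the later algorithm uses have this maximum-component property; your between-set comparison cannot be completed as proposed.
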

\begin{proof}
Let $C$ be the largest component of $G-N[x]$ and let $\Delta=N(C)$. 
Let $A=V \setminus (\Delta \cup C)$. Then every vertex of $A$ 
is adjacent to every vertex of $\Delta$, otherwise there would be a 
vertex $x^{\prime}$ for which the largest component of $G-N[x^{\prime}]$ 
properly contains $C$. 
\qed\end{proof}

Notice that, by Theorem~\ref{thm between} and Lemma~\ref{lm extremal pair}, 
every extremal pair is a dominating pair. 

\bigskip 

The following lemma gives the monophonic number for the 
graph induced by $A(x) \cup \Delta(x)$. 
For its proof we refer to~\cite{kn:paluga}. 

\begin{lemma}
\label{lm paluga}
Let $G_1$ and $G_2$ be two graphs and let $H=G_1 \otimes G_2$, 
that is, $H$ is the graph obtained from the union of 
$G_1$ and $G_2$ by adding 
all edges between pairs $x \in V(G_1)$ and $y \in V(G_2)$. 
Let $n_i=|V(G_i)|$, for $i \in \{1,2\}$. 
Then 
\[m(H)=
\begin{cases}
n_1+n_2 & \text{if $G_1 \simeq K_{n_1}$ and $G_2 \simeq K_{n_2}$}\\
m(G_2) & \text{if $G_1 \simeq K_{n_1}$ and $G_2 \not\simeq K_{n_2}$}\\
m(G_1) & \text{if $G_1 \not\simeq K_{n_1}$ and $G_2 \simeq K_{n_2}$}\\
\min\;\{\;4,\;m(G_1),\;m(G_2)\;\} & \text{if $G_1 \not\simeq K_{n_1}$ and 
$G_2 \not\simeq K_{n_2}$.}
\end{cases}\] 
\end{lemma}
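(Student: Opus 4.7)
The plan is to exploit the structure of the join $H = G_1 \otimes G_2$: every vertex of $V(G_1)$ is adjacent to every vertex of $V(G_2)$, so any two nonadjacent vertices of $H$ lie on the same side. The heart of the argument is a structural lemma describing the chordless paths of $H$. Namely, if $x,y$ are nonadjacent and both in $V(G_1)$, then every chordless $x,y$-path of $H$ either lies entirely in $V(G_1)$ (and is chordless there) or has the form $x - u - y$ for some $u \in V(G_2)$, with a symmetric statement for $V(G_2)$-endpoints; for any pair adjacent in $H$ (in particular every cross-side pair) no chordless path of length at least two exists, so $J_H(\{x,y\}) = \{x,y\}$.

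To prove the structural lemma I would take a chordless $x,y$-path $P = v_0 v_1 \cdots v_k$ with $v_0 = x,\, v_k = y \in V(G_1)$ and suppose some $v_i \in V(G_2)$ lies on it. For $i \geq 2$ the join supplies the edge $v_0 v_i$, which is a chord since $0$ and $i$ are nonconsecutive; by symmetry $i \leq k-2$ is also forbidden. So $V(G_2)$-vertices on $P$ occupy only positions in $\{v_1, v_{k-1}\}$. If both positions are in $V(G_2)$ and $k \geq 3$, then $v_0 v_{k-1}$ is a chord. Hence $k = 2$ and $P = x - u - y$. Writing $S_i = S \cap V(G_i)$, this yields the closed form
\begin{equation*}
J_H(S) = S \cup J_{G_1}(S_1) \cup J_{G_2}(S_2) \cup A_1 \cup A_2,
\end{equation*}
where $A_i = V(G_{3-i})$ if $S_i$ contains a nonadjacent pair and $A_i = \es$ otherwise.

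Armed with this closed form, the four cases of the lemma are routine bookkeeping. If both $G_i$ are cliques, so is $H$, and $J_H(S) = S$ forces $m(H) = n_1 + n_2$. If $G_1 \simeq K_{n_1}$ and $G_2 \not\simeq K_{n_2}$, then $S_1$ cannot contribute a nonadjacent pair, so $S$ monophonic in $H$ requires $S_2$ to be monophonic in $G_2$; conversely any monophonic set of the non-clique $G_2$ necessarily contains a nonadjacent pair (otherwise it would be a clique, hence closed under $J_{G_2}$, hence equal to $V(G_2)$, contradicting $G_2 \not\simeq K_{n_2}$), so covering $V(G_1)$ comes for free, giving $m(H) = m(G_2)$. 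The symmetric case is identical. If neither $G_i$ is a clique, a minimum monophonic set of either $G_i$ alone is already monophonic in $H$, and two nonadjacent pairs, one from each $G_i$, form a monophonic $4$-set, establishing $m(H) \leq \min\{4, m(G_1), m(G_2)\}$; the matching lower bound follows since a monophonic $S$ of $H$ either has a nonadjacent pair in each of $S_1$ and $S_2$ (hence $|S| \geq 4$) or is forced to be monophonic in one of the $G_i$ individually. The main obstacle is the structural lemma on chordless paths: the density of cross-side edges makes the argument delicate, as one must pin opposite-side vertices to exactly the positions $v_1$ and $v_{k-1}$ and then rule out the short exceptional configurations. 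Everything else reduces to evaluating the closed form of $J_H$.
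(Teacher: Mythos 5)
Your proof is correct. Note, however, that the paper itself gives no proof of this lemma at all --- it states the result and defers to an external reference (``For its proof we refer to~\cite{kn:paluga}''), so there is no in-paper argument to compare against; your write-up is a valid self-contained substitute. The structural step is sound: for nonadjacent $x,y\in V(G_1)$, any $v_i\in V(G_2)$ on a chordless $x,y$-path forces both $i=1$ (else $v_0v_i$ is a chord) and $i=k-1$ (else $v_kv_i$ is a chord), hence $k=2$; your phrasing in terms of ``positions in $\{v_1,v_{k-1}\}$'' followed by a separate elimination of $k\ge 3$ reaches the same conclusion, though the two constraints already intersect to give $k=2$ directly. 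The resulting closed form for $J_H(S)$, the observation that a monophonic set of a non-clique must contain a nonadjacent pair, and the four-case bookkeeping (including the lower bound $|S|\ge 4$ when both $S_1$ and $S_2$ contain nonadjacent pairs, versus $S_i$ being forced monophonic in $G_i$ otherwise) together establish all four cases of the formula.
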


\bigskip 
 
Let $\{x,y\}$ be an extremal pair. Then, by Lemma~\ref{lm extremal pair}, 
\[J(x,y) \subseteq \{x,y\} \cup \Delta(x) \cup \Delta(y) \cup 
\left(C_x(y) \cap C_y(x)\right).\] 
Unfortunately, we don't always have equality. 

\begin{lemma}
\label{lm J in C}
Let $C=C_x(y) \cap C_y(x)$ and let $z \in C$. 
Then 
\begin{multline}
z \notin J(x,y) \quad\Leftrightarrow\quad 
\forall_{a \in \Delta_z(x)} \; \forall_{b \in \Delta_z(y)}\;\;  
a=b \quad\text{or}\quad \{\;a,\;b\;\} \in E, \\
\text{where} \quad \Delta_z(x)=N(C_z(x)) \quad \text{and}\quad 
\Delta_z(y)=N(C_z(y)).
\end{multline}
\end{lemma}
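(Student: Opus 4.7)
The plan is to prove both implications through the correspondence between chordless $x,y$-paths through $z$ and pairs of ``entry'' and ``exit'' vertices in $\Delta_z(x)\times\Delta_z(y)$.

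For the easy direction, starting from the assumption that $z\in J(x,y)$, I would take a chordless $x,y$-path $P$ through $z$ and let $a,b$ be the two neighbors of $z$ on $P$, with $a$ on the $x$-side. Chordlessness immediately forces $a\neq b$ and $\{a,b\}\notin E$, since otherwise the edge $\{a,b\}$ would chord the length-two subpath $a,z,b$. The subpath $P[x,a]$ cannot visit $N[z]$ in its interior either: any such vertex, being adjacent to $z$, would form a chord of $P$ with $z$. Hence the interior of $P[x,a]$ lies entirely in $V\setminus N[z]$ and connects $x$ to a neighbor of $a$ inside $C_z(x)$, placing $a\in N(C_z(x))=\Delta_z(x)$. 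A symmetric argument gives $b\in\Delta_z(y)$. This is precisely the contrapositive of the ``$\Leftarrow$'' direction.

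For the converse, suppose $a\in\Delta_z(x)$ and $b\in\Delta_z(y)$ with $a\neq b$ and $\{a,b\}\notin E$. Pick $u_a\in C_z(x)\cap N(a)$ and $u_b\in C_z(y)\cap N(b)$, and let $P_x$ be an induced $x,a$-path in $G[C_z(x)\cup\{a\}]$ and $P_y$ an induced $b,y$-path in $G[\{b\}\cup C_z(y)]$. The concatenation $P=P_x\cdot z\cdot P_y$ is a genuine $x,y$-path through $z$, since by Theorem~\ref{thm between} the sets $C_z(x)$ and $C_z(y)$ are disjoint components of $G-N[z]$, and $a\neq b$ lies in neither. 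Internal chords of $P_x$ and $P_y$ are ruled out by inducedness; the only edges from $z$ into $V(P_x)\cup V(P_y)$ are $\{z,a\}$ and $\{z,b\}$, which are path edges; and an edge between the interiors of $P_x$ and $P_y$ would link distinct components of $G-N[z]$, which is impossible.

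The only remaining possibilities — and the technical crux of the argument — are chords of the form $\{a,v\}$ with $v\in V(P_y)\setminus\{b\}$ or $\{b,v\}$ with $v\in V(P_x)\setminus\{a\}$. These I would rule out via $\AT$-freeness: the presence of such a cross-chord, together with the two sides of $P_x$ and $P_y$ and the detour supplied by the chord itself, would exhibit three pairwise non-adjacent vertices drawn from $\{x,y\}$ and the offending sub-path, each joined to the other two by a path avoiding the third's closed neighborhood, contradicting the asteroidal-triple-free hypothesis. Making this case analysis precise is the main obstacle; it will likely require choosing $P_x$ and $P_y$ with an extra minimality condition (shortest-first, or minimising interaction with $N(a)\cup N(b)$) so that a surviving cross-chord can be parlayed into the forbidden triple, and it will split into two symmetric cases according to whether the chord emanates from $a$ or from $b$.
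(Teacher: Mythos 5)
Your forward direction (a chordless $x,y$-path through $z$ yields, via its two path-neighbours of $z$, a distinct non-adjacent pair in $\Delta_z(x)\times\Delta_z(y)$) is correct and coincides with the paper's argument. The problem is the converse, and the gap you flag there is genuine and, as far as I can see, not closable by the route you sketch. A cross-chord $\{a,v\}$ with $v$ in the interior of $P_y$ is not a configuration you can refute: it merely says that $a$ lies in $N(C_z(y))$ as well as in $N(C_z(x))$, and no asteroidal triple need arise from it, so neither $\AT$-freeness nor a minimality condition on $P_x$ and $P_y$ will remove it. Concretely, take the path $x,p,a,z,b,q,y$ and add the single edge $\{a,q\}$. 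This is a permutation graph (top/bottom positions $x=(0.7,0.5)$, $p=(0.5,1.5)$, $a=(3,1)$, $z=(1,3)$, $b=(4,2)$, $q=(2,4)$, $y=(5,3.5)$ realize it), $\{x,y\}$ is an extremal pair, and $z\in C_x(y)\cap C_y(x)$ with $\Delta_z(x)=\{a\}$ and $\Delta_z(y)=\{a,b\}$, so $a$ and $b$ form a distinct non-adjacent pair as in your hypothesis. Yet the only $x,y$-path containing $z$ is $x,p,a,z,b,q,y$, which carries the chord $\{a,q\}$, so $z\notin J(x,y)$. Hence the implication you are trying to establish fails as stated, and no cleverer choice of $P_x$ and $P_y$ can rescue the construction.

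You should not conclude that the paper holds a trick you missed. Its treatment of "the converse" is the single sentence that no chordless path can contain $z$ when every such pair is equal or adjacent "since any such path would have a chord" --- which is just the contrapositive of the easy direction restated, not a construction of a chordless path through $z$. So the hard implication is not actually proved in the paper either; your diagnosis of where the difficulty sits is accurate, but as a proof of the stated biconditional your proposal is incomplete, and the direction it leaves open appears to be false without some repair of the statement (for instance restricting the quantifiers to $a\in\Delta_z(x)\setminus\Delta_z(y)$ and $b\in\Delta_z(y)\setminus\Delta_z(x)$, which is consistent with the example above).
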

\begin{proof}
When $z \in C$ and $z$ is on a chordless $x,y$-path $P$, then $z$ has neighbors 
$u$ and $v$ in $P$ which are not adjacent. By Theorem~\ref{thm between}, 
$N[z]$ separates $x$ and $y$ in different components, so    
$u \in N(C_z(x))$ and $v \in N(C_z(y))$ are not adjacent.     
For the converse, no chordless path can contain $z$ 
when every vertex of $C_z(x)$ is equal or adjacent to every 
vertex in $C_z(y)$, since any such path would have a chord. 
\qed\end{proof}
 
\bigskip 

Two minimal separators $S_1$ and $S_2$ are parallel if 
all vertices of $S_1 \setminus S_2$ are contained in one component of 
$G-S_2$ and all vertices of $S_2 \setminus S_1$ are contained 
in one component of $G-S_1$.
In such a case, 
let $C_2$ be the component of $G-S_1$ that contains $S_2\setminus S_1$  
and let $C_1$ be the component of $G-S_2$ that contains $S_1\setminus S_2$. 
The vertices between $S_1$ and $S_2$ are the vertices of $C_1 \cap C_2$. 
 
\bigskip 

Minimal separators in permutation graphs were analyzed 
in~\cite{kn:bodlaender}
with the notion of a scanline. Consider a permutation diagram. A scanline 
is a linesegment with one endpoint on the topline and one 
endpoint on the bottom line, such that neither of its endpoints coincides  
with a labeled point of the diagram. 
When $S$ is a minimal separator in a permutation graph 
then there is a scanline $s$ in the diagram such that the linesegments that 
cross $s$ are exactly the vertices of $S$. 

\begin{lemma}
\label{lm ordered neighborhoods}
Let $S$ be a minimal separator in a permutation graph $G$ and let 
$C$ be a component in $G-S$. The neighborhoods in $C$ of two nonadjacent 
vertices in $S$ are ordered by inclusion. 
\end{lemma}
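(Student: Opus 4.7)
The plan is to exploit the scanline representation of minimal separators in permutation graphs. Fix a scanline $s$ in the permutation diagram with top endpoint $s_T$ and bottom endpoint $s_B$ whose crossing line segments are exactly $S$. Every vertex of $S$ crosses $s$, so its top endpoint and bottom endpoint lie on opposite sides of $s_T$ and $s_B$; accordingly I distinguish two ``orientations'': type-L (top endpoint left of $s_T$, bottom endpoint right of $s_B$) and type-R (the opposite configuration).

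The first step is to observe that two vertices of $S$ of opposite orientation must cross each other, and hence be adjacent: if $u$ is type-L and $v$ is type-R then $u_T < s_T < v_T$ while $u_B > s_B > v_B$, so the product $(u_T - v_T)(u_B - v_B)$ is negative and the segments intersect. Consequently, the nonadjacent pair $u, v \in S$ shares a single orientation, which I can take to be type-L by symmetry. Using nonadjacency once more, the top and bottom endpoints of $u$ and $v$ are ordered consistently along the two horizontal lines; WLOG $u_T$ is left of $v_T$ on the topline and $u_B$ is left of $v_B$ on the bottom line.

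The second step is to analyse where the component $C$ of $G - S$ lies. A vertex $w \in C$ does not cross $s$, so both of its endpoints lie on the same side of $s$; moreover a short argument shows that a left-side vertex and a right-side vertex cannot be adjacent (their segments trivially fail to cross), so $C$ sits entirely on one side. When $C$ is on the left, the bottom endpoint of $w \in C$ lies to the left of both $u_B$ and $v_B$, so $w$ is adjacent to a type-L vertex $x \in \{u, v\}$ exactly when $w_T$ is to the right of $x_T$; the ordering $u_T < v_T$ then gives $N(v) \cap C \subseteq N(u) \cap C$. The right-side case is symmetric and yields the reverse inclusion. Either way, the two neighborhoods in $C$ are ordered by inclusion.

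The only real subtlety is the opposite-orientation observation in step one, which forces adjacency and thereby reduces the problem to a single orientation class. Once that is pinned down, the remainder is just a careful bookkeeping of endpoint positions, and I do not anticipate any significant obstacle beyond keeping the left/right conventions straight.
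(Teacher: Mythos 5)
Your proposal is correct and follows essentially the same route as the paper's proof: both use the scanline representation, note that the component lies entirely on one side of the scanline, and compare endpoint positions of the two nonadjacent separator vertices to derive the inclusion. Your explicit treatment of the two orientations (and why opposite orientations force adjacency) just makes precise a step the paper passes over with a ``say that'' assumption.
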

\begin{proof}
Consider a scanline $s$. All the linesegments of a component 
occur on the same side of $s$; say the left side. 
Let $\tau$ and $\upsilon$ be nonadjacent 
vertices whose linesegments cross $s$. Say that $\tau$ and $\upsilon$ have their 
bottom endpoints on the left side of $s$ and say that the endpoint of 
$\upsilon$ on the bottom line 
is closer to $s$ than the endpoint of $\tau$. Then every linesegment 
of $C$ that crosses $\upsilon$ also crosses $\tau$, that is, 
\[N(\upsilon) \cap C \subseteq N(\tau) \cap C.\]
This proves the lemma.  
\qed\end{proof}

\begin{lemma}
Let $\{x,y\}$ be an extremal pair. Let $Q$ be a minimum monophonic set.  
Then 
\[Q \cap A(x)=\es \quad\Rightarrow\quad Q \cap \Delta(x) \neq \es.\] 
\end{lemma}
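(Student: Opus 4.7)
The plan is a proof by contradiction. Suppose $Q\cap A(x)=\es$ and, for contradiction, also $Q\cap\Delta(x)=\es$; then $Q\subseteq C_x(y)$. Since $x\notin C_x(y)$ and $x\notin\Delta(x)=N(C_x(y))$, we have $x\in A(x)$. Because $Q$ is monophonic, $x\in J(Q)$, and hence there is a chordless path $P$ in $G$ from some $q_1\in Q$ to some $q_2\in Q$ that passes through $x$. The goal is to show that no such $P$ can actually be chordless.

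The key ingredients are Lemma~\ref{lm extremal pair}, which gives the complete bipartite join between $A(x)$ and $\Delta(x)$ (so that $\Delta(x)$ is a minimal $x$--$C_x(y)$ separator), and Lemma~\ref{lm ordered neighborhoods}, which says that the $C_x(y)$-neighborhoods of two non-adjacent vertices of $\Delta(x)$ are comparable by inclusion. Since $A(x)$ has no neighbor in $C_x(y)$, the path $P$ must cross $\Delta(x)$ on each side of $x$; let $a_1$ (resp.\ $a_2$) be the vertex of $\Delta(x)$ on $P$ closest to $x$ on the $q_1$- (resp.\ $q_2$-)side. The subpath from $a_1$ to $x$ has all intermediate vertices in $A(x)$, and I would argue that $a_1$ and $x$ must in fact be \emph{consecutive} on $P$: any intermediate $u\in A(x)$ would be joined to $a_2\in\Delta(x)$ by the complete join, yielding a chord $ua_2$. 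Symmetrically $x$ and $a_2$ are consecutive, so $P$ locally reads $\ldots,w,a_1,x,a_2,w',\ldots$, and chordlessness then forces $a_1\not\sim a_2$.

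Next, apply Lemma~\ref{lm ordered neighborhoods} to $\{a_1,a_2\}\subseteq\Delta(x)$, assuming without loss of generality $N(a_1)\cap C_x(y)\subseteq N(a_2)\cap C_x(y)$ (the opposite ordering is handled symmetrically, with $w'$ in place of $w$). The predecessor $w$ of $a_1$ on $P$ exists, since $a_1\in\Delta(x)$ is distinct from $q_1\in C_x(y)$. Each of the three possible locations of $w$ then produces a chord of $P$: if $w\in C_x(y)$ then $w\in N(a_1)\cap C_x(y)\subseteq N(a_2)$, so $wa_2$ is a chord; if $w\in\Delta(x)$ then $w$ is adjacent to $x\in A(x)$ by the complete join, so $wx$ is a chord; and if $w\in A(x)$ then $w$ is adjacent to $a_2\in\Delta(x)$ by the complete join, so $wa_2$ is a chord. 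Each case contradicts the chordlessness of $P$, completing the proof.

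The main obstacle I anticipate is the careful bookkeeping of where $P$ sits with respect to the partition $V=C_x(y)\cup\Delta(x)\cup A(x)$, and in particular making sure that the three subcases for $w$ exhaust the possibilities and each force an unwanted chord. Lemma~\ref{lm ordered neighborhoods} is essential in the subcase $w\in C_x(y)$, as this is the only point at which the permutation-graph structure, rather than mere $\AT$-freeness, is invoked.
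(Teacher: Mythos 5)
Your proof is correct and follows essentially the same route as the paper's: both use the complete join between $A(x)$ and $\Delta(x)$ from Lemma~\ref{lm extremal pair} to force the chordless path through $x$ to contain two nonadjacent vertices of $\Delta(x)$ flanking $x$, and then invoke Lemma~\ref{lm ordered neighborhoods} to produce a chord unless an endpoint of the path already lies in $\Delta(x)$. Your version simply spells out the case analysis for the predecessor $w$ that the paper leaves implicit.
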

\begin{proof}
Let $Q$ be a minimum monophonic set and assume that $Q \cap A(x)=\es$. 
Assume that $x$ is on a chordless $q_1,q_2$-path $P$, for 
some $q_1$ and $q_2$ in $Q$. Then $q_1$ and $q_2$ are not adjacent. 
The chordless path $P$ must contain two nonadjacent vertices 
$q_1^{\prime}$ and $q_2^{\prime}$ in $\Delta(x)$. But then, by 
Lemma~\ref{lm ordered neighborhoods} the path cannot be chordless, 
unless one of $q_1$ and $q_2$ is in $\Delta(x)$. 
\qed\end{proof}

\bigskip 

The solutions for minimum monophonic sets $Q$ with $x \notin Q$ 
and $y \notin Q$ are easily obtained by a modified input with some 
extremal pair in the solution.  To simplify the description 
somewhat we henceforth assume that there is a minimum monophonic set 
which contains both elements of some extremal pair $\{x,y\}$.  

\bigskip 
  
\begin{definition}
Let $S_1$ and $S_2$ be two minimal separators in a permutation graph $G$. 
Then $S_1$ and $S_2$ are \underline{successional} if 
\begin{enumerate}[\rm (a)]
\item $S_1$ and $S_2$ are parallel, and   
\item $a \in S_1$ and $b\in S_2$ implies $a=b$ or $\{a,b\}\in E$, and   
\item The number of vertices between $S_1$ and $S_2$ is maximal with respect 
of the previous two conditions. 
\end{enumerate}
\end{definition}
 
\bigskip 

In the following we use the notation of lemma~\ref{lm extremal pair}. 
Let $\{x,y\}$ be an extremal pair. 
Let $C$ be a component of $G-J(x,y)-A(x)-A(y)$. 
Assume that the component $C$ is between successional separators 
$S_1$ and $S_2$. By definition of succesional separators and by the nature 
of the permutation diagram, every vertex of $C$ is adjacent to all vertices 
of $S_1$ or to all vertices of $S_2$. 
In other words, 
if some vertex of $S_1$ has no neighbors in $C$ then all vertices of 
$S_2$ are adjacent to all vertices of $C$. 
We say that a vertex $s \in S_1$ is partially adjacent to $C$ 
if $s$ has at least one neighbor and at least one nonneighbor in $C$. 

\bigskip 

We say that a component $C$ is \underline{covered} 
by a set of vertices $Q$, 
if every vertex of $C$ is on a chordless path between vertices in $Q$. 
For a component $C$, let $\xi(C)$ denote the minimum number of 
vertices in $G$ in any cover of $C$, that is, 
\[\xi(C)=\min\;\{\;|Q|\;|\; \text{$C$ is covered by $Q$}\;\}.\]  

\bigskip 

The algorithm recurses on components $C$ that are between succesional 
separators $S_1$ and $S_2$ and tabulates the covers of $C$, by 
distinct elements 
of $S_1$ and $S_2$. 
When $S_1$ is joined to the component, and when it has 
two nonadjacent vertices, only covers of $C$ with at most two 
elements of $S_1$ are of interest. When $S_1$ is a clique, 
joined to the component, then by Lemma~\ref{lm paluga}, no vertex of $S_1$  
is of interest for the cover.  To prove that 
the recursion is polynomial, we show, in the following lemma,  
that a vertex of $S_1$ is {\em partially\/} adjacent 
to at most one component that is between $S_1$ and $S_2$.  

\begin{lemma}
Each vertex of $S_1$ is partially adjacent to at 
most one component $C$ between $S_1$ and $S_2$. 
\end{lemma}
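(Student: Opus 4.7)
The plan is to argue geometrically in the permutation diagram, using the scanline representation of minimal separators together with a linear-order property of the components that lie inside the strip between the two scanlines. First I would fix a permutation diagram of $G$ and let $s_1,s_2$ be the scanlines representing $S_1,S_2$; since $S_1$ and $S_2$ are parallel, $s_1$ and $s_2$ do not cross, and I assume $s_1$ lies to the left of $s_2$. A component $C$ lying between $S_1$ and $S_2$ then corresponds to a family of linesegments whose top and bottom endpoints lie strictly inside the open strip bounded by $s_1$ and $s_2$. As a preliminary observation I would record that the components $C_1,\ldots,C_k$ in the strip can be relabelled so that for $i<j$, every top endpoint of $C_i$ lies to the left of every top endpoint of $C_j$, and the same for bottom endpoints. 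This is a standard consequence of the non-crossing condition between distinct components: if some $w'\in C_j$ had its top endpoint between two top endpoints of $w_1,w_2\in C_i$, then $w_1$ would sit entirely to the left of $w'$ and $w_2$ entirely to the right, and no path inside $C_i$ could link them without crossing $w'$.

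Next I would split on where $v\in S_1$ sits. If $v\in S_1\cap S_2$, then the linesegment of $v$ crosses both $s_1$ and $s_2$ and therefore spans the entire strip; every linesegment of every component $C$ in the strip must cross $v$, so $v$ is fully adjacent to each such $C$ and partially adjacent to none. Otherwise $v\in S_1\setminus S_2$, and exactly one endpoint of $v$ lies strictly inside the strip; without loss of generality this is the bottom endpoint $v^b$, while the top endpoint $v^t$ lies to the left of $s_1$. Then for any $w$ in the strip we have $v^t<w^t$, so $v$ and $w$ are adjacent if and only if $v^b>w^b$. Hence $v$'s neighbours inside the strip are exactly the linesegments whose bottom endpoints lie strictly to the left of $v^b$.

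Combining the two steps, the components $C_i$ fall into three classes according to the position of $v^b$: those whose bottom endpoints all lie to the left of $v^b$ (to which $v$ is fully adjacent), those whose bottom endpoints all lie to the right of $v^b$ (in which $v$ has no neighbour), and at most one boundary component whose bottom endpoints straddle $v^b$. Only this boundary component can be partially adjacent to $v$. The symmetric case in which the top endpoint of $v$ lies inside the strip is handled identically, with the roles of top and bottom swapped. I expect the main obstacle to be the clean bookkeeping of the linear order of components inside the strip; once that small permutation-graph lemma is in place, the remainder is a direct reading of adjacencies from the diagram.
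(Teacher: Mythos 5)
Your proof is correct and follows essentially the same route as the paper's: both argue in the permutation diagram that the components between the two scanlines are linearly ordered left to right, so a vertex of $S_1$ whose adjacency ``boundary'' falls inside one component must be fully adjacent or fully non-adjacent to every other component. Your version merely spells out the endpoint bookkeeping (which endpoint of $v$ lies in the strip and the crossing criterion $v^b>w^b$) that the paper's two-line proof leaves implicit.
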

\begin{proof}
Let $s \in S_1$ and assume that $s$ has a neighbor and a 
nonneighbor in $C$. 
Any other component $C^{\prime}$ has all linesegments 
to the left or to the right of all linesegments of $C$. Thus 
either $s$ is not adjacent to any vertex of $C^{\prime}$ or to all 
vertices of $C^{\prime}$. 
\qed\end{proof}

\bigskip 

\begin{lemma}
\label{lm bound S_1}
Let $C$ be a component between successional separators $S_1$ and $S_2$. 
For any cover $Q$ of $C$ there exists an alternative cover $Q^{\prime}$ with 
$|Q^{\prime}| \leq |Q|$ such that $Q^{\prime}$ has at most 4 vertices in $S_1$. 
\end{lemma}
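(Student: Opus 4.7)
My plan is to work directly in the permutation diagram of $G$, using the nested-neighborhood structure of $S_1$ from Lemma~\ref{lm ordered neighborhoods} to show that only four ``extremal'' vertices of $S_1$ are ever needed in a cover of $C$. Fix a scanline $\sigma$ realizing $S_1$ with $C$ on its left side; each $s\in S_1$ is a linesegment crossing $\sigma$, with one endpoint on the $C$-side (either the top or the bottom line) and one on the outer side. Partition $S_1$ into $S_1^{\top}$ and $S_1^{\bot}$ according to which line carries the $C$-side endpoint. Within each part, linearity of the line orders these endpoints, and by Lemma~\ref{lm ordered neighborhoods} that order coincides with nesting of neighborhoods in $C$. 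Let $E\subseteq S_1$ consist of the leftmost and rightmost $C$-side endpoint in each of the two classes, so $|E|\le 4$. I will show that $Q':=(Q\setminus S_1)\cup E$ is a cover of $C$, which yields $|Q'\cap S_1|\le 4$ and $|Q'|\le|Q|$ as soon as $|Q\cap S_1|\ge 4$; the case $|Q\cap S_1|<4$ is trivial.

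To establish the cover claim, fix $v\in C$ and let $P=q_1,\ldots,v,\ldots,q_2$ be a witnessing chordless path with $q_1,q_2\in Q$. If neither $q_i$ lies in $S_1$ there is nothing to do, so suppose $q_1\in S_1$. An argument analogous to the one used in the proof of Lemma~\ref{lm J in C}, combining the nested-neighborhood lemma with the successional condition~(b) (every $S_1$-$S_2$ pair is equal or adjacent), forces a chord whenever $q_2\in S_1\cup S_2$: in the $S_1$ case the first $C$-neighbor of $q_1$ on $P$ is automatically a neighbor of $q_2$ by nesting, and in the $S_2$ case the edge $q_1q_2$ itself chords any length-$\ge 2$ path. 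Hence $q_2\in Q\setminus(S_1\cup S_2)\subseteq Q'$. Let $c_1$ be the vertex of $P$ adjacent to $q_1$; after pushing past any short $S_1$-detours at the start of $P$ via the same nested-neighborhood argument, I may assume $c_1\in C$. Depending on the class of $q_1$ and the direction in which $c_1$'s relevant endpoint sits, I select the extremal $e\in E$ of the matching class and direction; the nested-neighborhood property then guarantees that $e$ is adjacent to $c_1$, and the candidate replacement path is $P':=e,c_1,\ldots,q_2$.

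The heart of the proof — and the main obstacle — is verifying that $P'$ is itself chordless. Since $e$ is extremal, its $C$-neighborhood can properly contain that of $q_1$, so one must rule out potential chords $ec_j$ for $j\ge 2$ as well as chords from $e$ to the portion of $P$ lying outside $C$ in case $P$ later re-enters $C$ through $S_1\cup S_2$. I plan a case split on where $P$ exits $C$: the all-interior case is handled by choosing $e$'s extremal direction so that every later $c_j$ lies on the ``wrong'' side of $e$'s linesegment, precluding the required interleaving of endpoints for a crossing; the exit-through-$S_2$ case is handled by the successional property, which forces any candidate chord from $e$ to have existed already from $q_1$, contradicting the chordlessness of $P$; and the exit-through-$S_1$ case once again uses Lemma~\ref{lm ordered neighborhoods} to transfer a chord from $e$ back to the corresponding chord from $q_1$. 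Once the rerouting is established for each $v\in C$ that was covered by a path touching $Q\cap S_1$, the cover $Q'$ is assembled uniformly and the lemma follows.
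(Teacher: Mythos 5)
Your strategy diverges from the paper's and, as written, has two genuine gaps. First, your unconditional conclusion that $q_2\in Q\setminus(S_1\cup S_2)$ whenever $q_1\in S_1$ is false: the nesting argument from Lemma~\ref{lm ordered neighborhoods} only forces a chord on paths with at least two internal vertices, so a chordless path $q_1,v,q_2$ of length two with $q_1,q_2$ a \emph{nonadjacent} pair in $Q\cap S_1$ survives --- and these length-two paths are precisely the reason any vertices of $S_1$ must be retained at all. The paper handles this case separately: it considers the set $C^{\ast}$ of vertices of $C$ that are covered as common neighbors of nonadjacent pairs in $Q\cap S_1$, picks the two vertices $c_1,c_2\in C^{\ast}$ whose segments have the bottom (resp.\ top) endpoint furthest from the scanline, and retains the two pairs of $Q\cap S_1$ covering them; these four vertices then cover all of $C^{\ast}$ by the diagram geometry. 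Your set $E$ of four class-extremal vertices of $S_1$ is chosen independently of $Q$ and does not do this job: the two extremal segments within a class need not be nonadjacent, and the one whose $C$-side endpoint is closest to the scanline has the \emph{smallest} neighborhood in $C$, so a vertex covered by some nonadjacent pair of $Q\cap S_1$ need not be adjacent to both members of your candidate pair.

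Second, for paths with exactly one endpoint $q_1$ in $S_1$, the paper does not substitute another $S_1$-vertex; it \emph{extends} the path outward from $q_1$ to $x$ or $y$ (possible because $S_1$ separates $C$ from the side containing the extremal pair, which is already in the solution), so $q_1$ is simply discarded. Your inward rerouting $P'=e,c_1,\dots,q_2$ runs straight into the difficulty you flag yourself: since $N(e)\cap C$ may properly contain $N(q_1)\cap C$, the vertex $e$ can be adjacent to a later vertex $c_j$ of $P$, and chordlessness of $P$ gives no control over such edges. The natural repair --- shortcutting to the last neighbor of $e$ on the path --- may excise the very vertex $v$ you are trying to cover, and your proposed case split (``wrong side of $e$'s linesegment'') does not rule this out, because the extremal direction of $e$ within $S_1$ says nothing about where the path vertices $c_2,\dots$ sit relative to $e$ versus $q_1$. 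So the ``heart of the proof'' remains a plan rather than an argument; I would adopt the paper's two mechanisms (outward extension to $\{x,y\}$, plus two retained pairs from $Q\cap S_1$ for the length-two case).
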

\begin{proof}
Let $C$ be a component between successional separators $S_1$ and $S_2$. 
Let $Q$ be a cover of $C$. Assume that a vertex $c \in C$ 
is on a chordless path between $s \in Q \cap S_1$ and 
$c^{\prime} \in Q \cap C$. 
Then the $c^{\prime},s$-path can be extended from $s$ to either $x$ or $y$. Thus 
we may replace those vertices $s \in Q \cap S$ by the single vertex $x$ or $y$ 
in $Q^{\prime}$.  A similar argument applies to chordless paths  
between two elements of different components $C$ and $C^{\prime}$.  

\medskip 

\noindent 
The other possibility is that a vertex $c \in C$ is on a chordless path between 
two vertices $s_1$ and $s_2$ in $Q \cap S$; that is, $c$ is a 
common neighbor of $s_1$ and $s_2$.  Let $C^{\ast}$ be the set of 
vertices in $C$ that are common neighbors of pairs in $Q \cap S_1$. 
Let $c_1 \in C^{\ast}$ be the vertex  
whose linesegment has an endpoint on the bottom line furthest 
from the scanline $S_1$. Similarly, let $c_2 \in C^{\ast}$ be the vertex 
whose linesegment has an endpoint on the topline furthest 
from the scanline $S_1$. The two pairs of $Q \cap S_1$ that cover 
$c_1$ and $c_2$ cover all 
other vertices of $C^{\ast}$. 
\qed\end{proof}
 
\bigskip 

\begin{theorem}
There exists a polynomial algorithm to compute the monophonic 
number of permutation graphs. 
\end{theorem}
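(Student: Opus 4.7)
The plan is to assemble the structural machinery built up in this section---extremal pairs, the join structure around them, the chain of successional separators, and the cover number $\xi(C)$---into a polynomial-time dynamic program that sweeps the permutation diagram from left to right.

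First, I compute an extremal pair $\{x,y\}$ in polynomial time by enumerating nonadjacent pairs and picking one that maximizes $|C_x(y)\cap C_y(x)|$, each evaluation being a breadth-first search in $G-N[x]$ and $G-N[y]$. Following the remark preceding the theorem, I may restrict attention to the case in which some extremal pair belongs to a minimum monophonic set $Q$; the other cases reduce to polynomially many modified inputs. I then decompose $V=\{x,y\}\cup A(x)\cup\Delta(x)\cup A(y)\cup\Delta(y)\cup C$ with $C=C_x(y)\cap C_y(x)$. By Lemma~\ref{lm extremal pair}, each of $G[A(x)\cup\Delta(x)]$ and $G[A(y)\cup\Delta(y)]$ is a join, so Lemma~\ref{lm paluga} evaluates their contributions directly, and Lemma~\ref{lm J in C} identifies which vertices of $C$ already lie in $J(x,y)$ and therefore need not be covered.

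The heart of the algorithm is a dynamic program over the chain of successional separators in the middle. By the scanline representation, these separators form a linear sequence $\Delta(x)=S_0,S_1,\dots,S_k=\Delta(y)$, and each component $C'$ of $G-J(x,y)-A(x)-A(y)$ lies between a consecutive pair $(S_{i-1},S_i)$. For each such $C'$ and for each admissible configuration of ``used'' boundary vertices $T_{i-1}\subseteq S_{i-1}$, $T_i\subseteq S_i$, I tabulate the minimum number of vertices taken from $C'$ that, together with the configuration, forms a cover of $C'$. Lemma~\ref{lm bound S_1} caps $|T_{i-1}|,|T_i|\leq 4$, and the fact that each vertex of $S_{i-1}$ is partially adjacent to at most one component between $S_{i-1}$ and $S_i$ keeps the state space polynomial; individual cover values are evaluated locally using Lemma~\ref{lm ordered neighborhoods} to locate common neighbors. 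A left-to-right sweep glues the local tables into a global minimum cover of the middle, and summing the middle, the two outer joins, and the pair $\{x,y\}$ gives $m(G)$.

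The main obstacle I anticipate is correctly combining locally optimal covers across separators: a single separator vertex may be committed by covers in several adjacent slots, and a chordless path assembled through several successive separators must remain chordless in $G$. The bound of four together with the partial-adjacency property are designed precisely to tame this interaction, while successionality---consecutive separators being parallel with every cross edge present outside the separators themselves---ensures that chordless paths routed through committed boundary vertices and terminated at $x$ or $y$ stay induced. The careful bookkeeping of these compatibility conditions, and a clean description of the modified inputs needed to handle minimum monophonic sets avoiding both $x$ and $y$, is where I expect the analysis to be most delicate.
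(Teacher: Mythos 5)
Your proposal follows essentially the same route as the paper's own (admittedly sketchy) proof: an extremal pair, the join structure handled by Lemma~\ref{lm paluga}, and a dynamic program over the components between successional separators with boundary configurations bounded via Lemma~\ref{lm bound S_1} and the partial-adjacency property. The delicate compatibility issues you flag at the end are precisely the details the paper also leaves unelaborated, so your account is at least as complete as the original.
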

\begin{proof}
We only sketch the detailed, but otherwise standard, dynamic 
programming algorithm. 

\medskip 

\noindent
The proposed algorithm builds a decomposition tree. The root of 
the tree represents the graph $G$, with an extremal pair $\{x,y\}$. 
The children are the components of $G-J(x,y)-A(x)-A(y)$. These 
components are recursively decomposed in subtrees. 

\medskip 

\noindent 
By Lemmas~\ref{lm paluga} and~\ref{lm bound S_1} we may restrict to 
a polynomial number of monophonic covers of each component. 
By dynamic programming, the algorithm groups successive components 
together, and builds a table for the covers of intervals of components. 
Since the number of table entries is polynomial, the algorithm 
runs in polynomial time. 
\qed\end{proof}

\bigskip 

\begin{remark}
We conjecture that a similar algorithm works for $\AT$-free graphs. 
\end{remark}


\begin{thebibliography}{99}






\bibitem{kn:atici}Atici,~M., 
Computational complexity of geodetic set, 
{\em Int. J. Comput. Math.\/} {\bf 79} (2002), pp.~587--591. 





\bibitem{kn:broersma}Broersma,~H., T.~Kloks, D.~Kratsch and H.~M\"uller, 
Independent sets in asteroidal triple-free graphs, 
{\em SIAM Journal on Discrete Mathematics\/} {\bf 12} (1999), pp.~276--287. 


\bibitem{kn:buckley2}Buckley,~F., F.~Harary and L.~Quintas, 
Extremal results on the geodetic number of a graph, 
{\em Scientia, Series 2A\/} (1988), pp.~17--26. 



\bibitem{kn:caceres3}C\'aceres,~J., A.~M\'arquez and M.~Puertas, 
Steiner distance and convexity in graphs, 
{\em European Journal of Combinatorics\/} {\bf 29} (2008), pp.~726--736. 

\bibitem{kn:centeno}Centeno,~C., L.~Penso, D.~Rautenbach and V.~de~S\'a, 
Geodetic number versus hull number in $P_3$-convexity, 
{\em SIAM Journal on Discrete Mathematics\/} {\bf 27} (2013), pp.~717--731. 



\bibitem{kn:changat3}Changat,~M. and J.~Mathew, 
Induced path transit functions, monotone and Peano axioms, 
{\em Discrete Mathematics\/} {\bf 286} (2004), pp.~185--194. 




%
\bibitem{kn:chvatal}Chv\'atal,~V., 
Antimatroids, betweenness, convexity. 
In (W.~Cook, L.~Lov\'asz and J.~Vygen eds.) 
{\em Research trends in combinatorial optimization\/}, 
Bonn, 2008. Springer-Verlag, Berlin Heidelberg, 2009, pp.~57--64. 
%
%
%
\bibitem{kn:dourado}Dourado,~M., J.~Gimbel, J.~Kratochv\'il, F.~Protti, J.~Szwarcfiter, 
On the computation of the hull number of a graph, 
{\em Discrete Mathematics\/} {\bf 309} (2009), pp.~5668--5674. 
%
\bibitem{kn:dourado3}Dourado,~M., F.~Protti, D.~Rautenbach and J.~Szwarcfiter, 
Some remarks on the geodetic number of a graph, 
{\em Discrete Mathematics\/} {\bf 310} (2010), pp.~832--837. 
%
\bibitem{kn:dourado4}Dourado,~M., F.~Protti and J.~Szwarcfiter, 
Complexity results related to monophonic convexity, 
{\em Discrete Applied Mathematics\/} {\bf 158} (2010), pp.~1268--1274. 
%
\bibitem{kn:duchet}Duchet,~P., 
Convex sets in graphs, II. Minimal path convexity, 
{\em Journal of Combinatorial Theory, Series B\/} {\bf 44} (1988), pp.~307--316. 

%
\bibitem{kn:ekim}Ekim,~T., A.~Errey, P.~Heggernes, 
P.~van 't Hof and D.~Meister, 
Computing minimum geodetic sets in proper interval graphs, 
{\em Proceedings LATIN'12\/}, Springer-Verlag, LNCS~7256 (2012), pp.~279--290. 
%
\bibitem{kn:farber}Farber,~M. and R.~Jamison, 
Convexity in graphs and hypergraphs, 
{\em SIAM J. Alg. Disc. Meth.\/} {\bf 7} (1986), pp.~433--444. 
%
\bibitem{kn:hansen}Hansen,~P. and N.~van~Omme, 
On pitfalls in computing the geodetic number of a graph, 
{\em Optimization Letters\/} {\bf 1} (2007), pp.~299--307. 

\bibitem{kn:harary}Harary,~F., E.~Loukakis and C.~Tsouros, 
The geodetic number of a graph, 
{\em Mathl. Comput. Modelling\/} {\bf 17} (1993), pp.~89--95. 
%
\bibitem{kn:hernando}Hernando,~C., T.~Jiang, M.~Mora, P.~Pelayo and C.~Seara, 
On the steiner, geodetic and hull numbers of graphs, 
{\em Discrete Mathematics\/} {\bf 293} (2005), pp.~139--154. 
%
\bibitem{kn:jamison3}Jamison,~R., 
A perspective on abstract convexity: Classifying alignments by varieties. 
In (D.~Kay and M.~Breen eds.) 
{\em Convexity and related combinatorial geometry\/}, Marcel Dekker, New york, 1982. 
%
\bibitem{kn:kante}Kant\'e,~M. and L.~Nourine, 
Computing the hull number of chordal and distance-hereditary graphs, 
{\em Proceedings SOFSEM'13\/}, Springer-Verlag LNCS~7741 (2013), pp.~268--279. 
%
%
\bibitem{kn:kloks}Kloks,~T. and Y.~Wang, 
{\em Advances in graph algorithms\/}. Manuscript on Vixra: 1409.0165, 2014. 
%
\bibitem{kn:lekkerkerker}Lekkerkerker,~C. and J.~Boland, 
Representation of a finite graph by a set of intervals on the real line, 
{\em Fundamenta Mathematicae\/} {\bf 51} (1962), pp.~45--64. 
%
%
%
\bibitem{kn:pelayo}Pelayo,~I., 
{\em Geodesic convexity in graphs\/}, Springer briefs in mathematics, 
Springer Science+Business Media, New York, 2013. 
%
\bibitem{kn:tinhofer}Tinhofer,~G., 
Strong tree-cographs are Birkhoff graphs, 
{\em Discrete Applied Mathematics\/} {\bf 22} (1988/89), pp.~275--288. 


\end{thebibliography}
\end{document}